\title{PSPACE-Hard 2D Super Mario Games: Thirteen Doors}
\author{%
  MIT Hardness Group%
    \thanks{Artificial first author to highlight that the other authors (in
      alphabetical order) worked as an equal group. Please include all
      authors (including this one) in your bibliography, and refer to the
      authors as “MIT Hardness Group” (without “et al.”).}
\and
  Hayashi Ani%
    \thanks{MIT Computer Science and Artificial Intelligence Laboratory,
      32 Vassar St., Cambridge, MA 02139, USA, \protect\url{{joshuaa,edemaine,hhall314}@mit.edu}}
\and
  Erik D. Demaine\footnotemark[2]
\andlinebreak
  Holden Hall\footnotemark[2]
\and
  Matias Korman%
    \thanks{Siemens Electronic Design Automation, Wilsonville, OR 97070, USA,
      \protect\url{matias.korman@siemens.com}}
}
\date{}
\newif\ifabstract
\newif\iffull
\makeatletter \hypersetup{pdftitle={\@title}}}
 \gdef\xxxmark{%
   \expandafter\ifx\csname @mpargs\endcsname\relax % in minipage?
     \expandafter\ifx\csname @captype\endcsname\relax % in figure/caption?
       \marginpar{xxx}% not in a caption or minipage, can use marginpar
     \else
       xxx % notice trailing space
     \fi
   \else
     xxx % notice trailing space
   \fi}
 \gdef\xxx{\@ifnextchar[\xxx@lab\xxx@nolab}
 \long\gdef\xxx@lab[#1]#2{\textbf{[\xxxmark #2 ---{\sc #1}]}}
 \long\gdef\xxx@nolab#1{\textbf{[\xxxmark #1]}}
\gdef\fps@figure{!htbp}}
\let\realbfseries=\bfseries
\def\bfseries{\realbfseries\boldmath}
\def\andlinebreak{\end{tabular}\linebreak\begin{tabular}[t]{c}}
\newtheorem{theorem}{Theorem}[section]
\newtheorem{corollary}[theorem]{Corollary}
\theoremstyle{definition}
\let\epsilon=\varepsilon
\def\defn#1{\textbf{\textit{\boldmath #1}}}
\def\GAME#1{\textit{#1}}
\def\GAME#1{{#1}}
\def\ICON#1{\hspace{0.1em}\raisebox{-0.6ex}{\includegraphics[height=\baselineskip]{figures/icons/#1}}\hspace{0.1em}\nolinebreak}
\definecolor{header}{rgb}{0.29,0,0.51}
\definecolor{gray}{rgb}{0.85,0.85,0.85}
\definecolor{hard}{rgb}{1,0.85,0.85}
\definecolor{open}{rgb}{1,1,0.85}
\definecolor{easy}{rgb}{0.85,0.85,1}
\def\header#1{\cellcolor{header}\textcolor{white}{\textbf{#1}}}
\begin{document}
\maketitle

\begin{abstract}
  We prove PSPACE-hardness for fifteen games in the
  Super Mario Bros.\ 2D platforming video game series.
  Previously, only the original Super Mario Bros.\ was known to be PSPACE-hard
  (FUN 2016), though several of the games we study were known to be NP-hard
  (FUN 2014).
  Our reductions build door gadgets with open, close, and traverse traversals,
  in each case using mechanics unique to the game.
  While some of our door constructions are similar to those from FUN 2016,
  those for Super Mario Bros.\ 2, Super Mario Land 2, Super Mario World 2,
  and the New Super Mario Bros.\ series are quite different;
  notably, the Super Mario Bros.\ 2 door is extremely difficult.
  Doors remain elusive for just two 2D Mario games (Super Mario Land
  and Super Mario Run); we prove that these games are at least NP-hard.
\end{abstract}

\section{Introduction}

At FUN 2016, Demaine, Viglietta, and Williams~\cite{smb1} proved that
it is PSPACE-hard to complete a level in \GAME{Super Mario Bros.},
when the game is generalized to an arbitrary level size, screen size,
number of on-screen enemies, and (exponentially large) time limit.
(They also considered versions with bounded screen size, where off-screen
enemies reset, but this makes the game substantially easier.)
But \GAME{Super Mario Bros.}\ is just the first game in a venerable series of
Super Mario platforming video games.
Consequently, that paper ended with an open problem about other games:

\begin{quote}
Finally, we suspect that our proofs can be adapted to the many
Super Mario Bros.\ sequels, but this remains to be explored.
\hspace{\fill}
\cite{smb1}
\end{quote}

In this paper, we explore these sequels, analyzing the complexity of
all fifteen 2D Super Mario platform video games released to date.
Table~\ref{results} summarizes our results,
most of which are PSPACE-hardness.
Previously, no other PSPACE-hardness results were known,
though four of the games we prove PSPACE-hard were known to be NP-hard
from an earlier FUN 2014 paper \cite{nphard}.

\begin{table}[t]
    \centering
    \begin{tabular}{cl|rcc}
         \header{Year} & \header{Game} & \header{Lower Bound} & \header{Ref} & \header{Previous Bound} \\ \hline
         1985 & Super Mario Bros. & PSPACE-hard & \cite{smb1} & NP-hard \cite{nphard} \\
         1986 & Super Mario Bros.: The Lost Levels & PSPACE-hard & Thm.~\ref{thm:smb2j} & NP-hard \cite{nphard} \\
         1988 & Super Mario Bros.\ 2 & PSPACE-hard & Thm.~\ref{thm:smb2} & NP-hard \cite{nphard} \\
         1988 & Super Mario Bros.\ 3 & PSPACE-hard & Thm.~\ref{thm:smb3} & NP-hard \cite{nphard} \\
         1989 & Super Mario Land & NP-hard & Thm.~\ref{thm:sml} & \\
         1990 & Super Mario World & PSPACE-hard & Thm.~\ref{thm:smw} & NP-hard \cite{nphard} \\
         1992 & Super Mario Land 2: 6 Golden Coins & PSPACE-hard & Thm.~\ref{thm:sml2} & \\
         1995 & Super Mario World 2: Yoshi's Island & PSPACE-hard & Thm.~\ref{thm:smw2} & \\
         2006 & New Super Mario Bros. & PSPACE-hard & Thm.~\ref{thm:nsmb} & \\
         2009 & New Super Mario Bros.\ Wii & PSPACE-hard & Thm.~\ref{thm:nsmb} & \\
         2012 & New Super Mario Bros.\ 2 & PSPACE-hard & Thm.~\ref{thm:nsmb} & \\
         2012 & New Super Mario Bros.\ U & PSPACE-hard & Thm.~\ref{thm:nsmb} & \\
         2015 & Super Mario Maker \em (all four styles) & PSPACE-hard & Thm.~\ref{thm:smm} & \\
         2016 & Super Mario Run & NP-hard & Thm.~\ref{thm:smr} & \\
         2019 & Super Mario Maker 2 \em (all five styles) & PSPACE-hard & Thm.~\ref{thm:smm2} & \\
         2023 & Super Mario Bros.\ Wonder & PSPACE-hard & Thm.~\ref{thm:smbw} & \\
    \end{tabular}
    \caption{New and known results for all sixteen 2D Mario platform games,
      in order of release date.}
    \label{results}
\end{table}

Our PSPACE-hardness reductions all involve building ``door gadgets'',
a technique first used to prove PSPACE-completeness of
Lemmings \cite{Viglietta-2015} and then \GAME{Super Mario Bros.}\
\cite{smb1}.
An \defn{open-close door gadget} is a constant-size piece of a level
that can be in two states, open or closed, and has three possible traversal
paths: the \defn{open} path allows the player to change the state to open,
the \defn{close} path forces the player to change the state to closed,
and the \defn{traverse} path can be traversed only when the door is open.

These original applications also required a ``crossover gadget''
to enable non-interacting crossing tunnels for the player to traverse.
At FUN 2020, however, Ani et al.~\cite{doors} showed that (in most cases)
just a door gadget suffices, and crossovers are unnecessary.
They also introduced two other types of doors --- self-closing doors and
symmetric self-closing doors --- each of which alone suffices to prove
PSPACE-completeness.
They also applied this doors framework to prove that all 3D Mario games
released to date are PSPACE-hard.%
\footnote{Since the paper appeared, one more 3D Mario game has been released:
  \GAME{Bowser's Fury} (as part of \GAME{Super Mario 3D World + Bowser's Fury}).
  But this game has the same mechanics as \GAME{Super Mario 3D World},
  in particular switchboards, so their PSPACE-hardness proof applies.}

In this paper, we apply the doors framework of \cite{doors}
to prove PSPACE-hardness of thirteen more 2D Mario games.
Several of these doors (presented in Section~\ref{sec:open-close})
are variations of the open-close door gadget
from \GAME{Super Mario Bros.}\ \cite{smb1}, but even so,
they require careful adjustments and checking because each game
(except one) adds some mechanics while removing other mechanics
from \GAME{Super Mario Bros.}
In one case, \GAME{Super Mario Bros.\ 2}, the open-close door we construct
(in Section~\ref{sec:smb2}) is completely different and quite complicated.
For other games, we build self-closing doors (in Section~\ref{sec:scd})
or symmetric self-closing doors (in Section~\ref{sec:event sscd}).

For two 2D Mario games, \GAME{Super Mario Land} and \GAME{Super Mario Run},
we have not yet succeeded in building any door gadget.
But we can at least prove only NP-hardness of these games
(in Section~\ref{sec:np}), following the SAT framework first used
to prove \GAME{Super Mario Bros.}\ NP-hard \cite{nphard}.

%\footnote{Technically, SMB-general, which is a generalized version of \GAME{Super Mario Bros}. More details discussed in Section \ref{general}.}
% is PSPACE-hard via a reduction from the ``open-close door" framework. In 2020, Ani et al.~\cite{doors} expanded on the door framework and used those results to show that the 3D Mario games are PSPACE-hard. We set out to use this same framework to show hardness for 15 more 2D Mario games. Table \ref{results} summarizes our results.

Our PSPACE-hardness results leave open which Mario games are in PSPACE
and which are harder.  Specifically, membership in PSPACE would hold
if we polynomially bounded the maximum number of on-screen enemies
or the maximum number of enemies at each screen position.
This claim was made for \GAME{Super Mario Bros.}\ in \cite{smb1}.
But even \GAME{Super Mario Bros.}\ has an infinite source of enemies
(if we remove the bound on enemies): Lakitu periodically spawns spinies.
Many other Mario games have pipes that periodically spawn items or enemies.
In some cases, these mechanics can be used to prove
RE-completeness and thus undecidability;
we explore this direction in a companion paper \cite{undecidable}.

%\FloatBarrier

\section{Generalized Mario}\label{general}

For each Mario game that we analyze, we make sure to only use blocks, enemies,
objects, and other elements that appear in that game as released.
We also make no changes to the physics or other interactions
between the player and game elements. 

However, actual Mario video games place several constraints on level sizes,
number of onscreen enemies, and other parameters.
For the purposes of analyzing complexity,
we define generalized versions of each game with the following properties:
\begin{itemize}
    \item No arbitrary limits on the level width, level height, and numbers of objects and events.
    \item Exponentially long time limits, or no time limit whatsoever (as in \GAME{Super Mario Bros.\ 2} and \GAME{Super Mario Bros.\ Wonder}).
    \item Arbitrarily large screen size, as large as the entire level.
    One exception is \GAME{Super Mario Bros.\ 2}, which remembers necessary offscreen state, so we do not generalize in this case;
    see Section~\ref{sec:smb2}.
    %--- I don't see why we need this if the whole level is one screen: -Erik
    %\item Allow the screen to scroll backwards (this is a given in most games)
\end{itemize}
For simplicity, we use the original name of each game to refer to the generalized version.
In all of these games, we restrict to a single player using a single input device unless otherwise stated.

\subsection{Forbidding Powerups}
A key defining feature of Mario games is powerups, such as the mushroom which makes Mario grow in size and allows him to take damage once without dying. Powerups can break many of the gadgets presented in this paper. For this reason, we want to assume that Mario comes into each level without a powerup and never collects one, unless we explicitly say otherwise. One way of doing this is to simply build a Mario game which features no powerups. However, in the interest of only caring about solvability of a single level, we should assume that Mario might be able to come into the level with a powerup. We can handle this by starting each level with a powerup and forcing Mario to take damage, e.g., by having to walk over a long row of spikes/munchers or through an enemy.
Henceforth, in all of our gadgets, we will assume Mario begins in the non-powered state.

\subsection{Playtesting}

The majority of the gadgets featured here have been tested in the physics of
the original games, by modifying those games with community-built level editors
(see the Acknowledgments for details).
You can watch videos of the gadgets in action,
under both correct use and attempted misuse, on
\href{https://www.youtube.com/playlist?list=PLCZQ5yzonfsaxrs9jZ41pgMvK4nRHSTXh}{YouTube}.%
\footnote{\url{https://www.youtube.com/playlist?list=PLCZQ5yzonfsaxrs9jZ41pgMvK4nRHSTXh}}
To try these levels out yourself, you can download playable level files from
\href{https://github.com/65440-2023/mario-hardness-gadgets}{GitHub}.%
\footnote{\url{https://github.com/65440-2023/mario-hardness-gadgets}}
Given game hardware constraints and software limits in the released versions
of these games, the gadgets may be modified slightly from what we present
in the paper.

\subsection{Reachability with Doors}
In this paper, we reduce from a known PSPACE-complete problem
called ``reachability with planar door gadgets'' \cite{doors},
which we now describe.

The specific door gadgets used in this paper are the open-close door,
self-closing door, and symmetric self-closing door,
as shown in Figure~\ref{doors}.
Each of these door gadgets has two states: open and closed.
The (optional-open) \defn{open-close door} consists of a traversal path (blue),
a close path (red), and an optional open path (green).
Traversing the close path forces the door into the closed state.
Traversing the open path puts the door into the open state,
but as the entrance and exit location of the open path are the same,
the player can freely decide whether to open the door or not.
The traversal path can be traversed by the player
only when the door is in the open state.
The (optional-open) \defn{self-closing door} consists of a traversal path,
which forces the door into the closed state when traversed,
and an optional open path which opens the door.
A \defn{symmetric self-closing door} consists of two traversal paths.
Traversing the top path is possible only when the door is open, and
it forces the door to become closed; while traversing the bottom path
is possible only when the door is closed, and it forces the door to open.

\begin{figure}
    \centering
    \begin{subfigure}[t]{0.3\linewidth}
        \includegraphics[width=\linewidth]{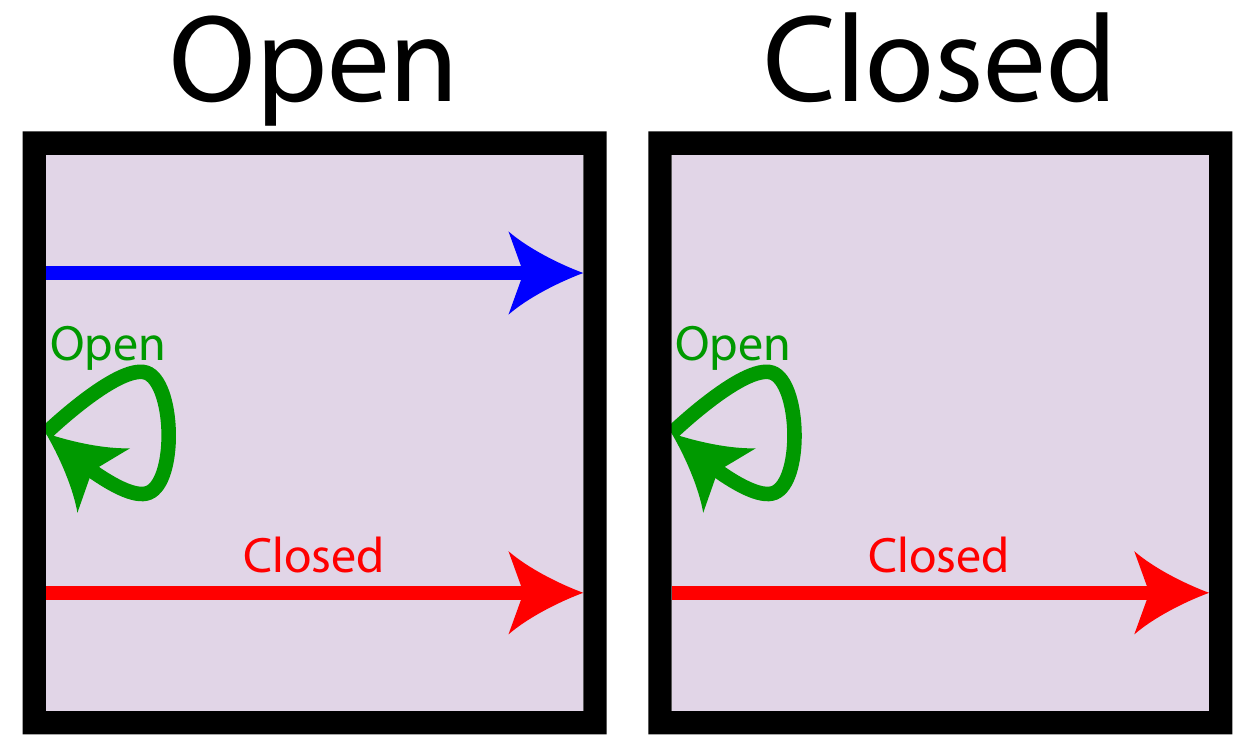}
        \caption{The open-close door gadget}
    \end{subfigure}\hfil
    \begin{subfigure}[t]{0.3\linewidth}
        \includegraphics[width=\linewidth]{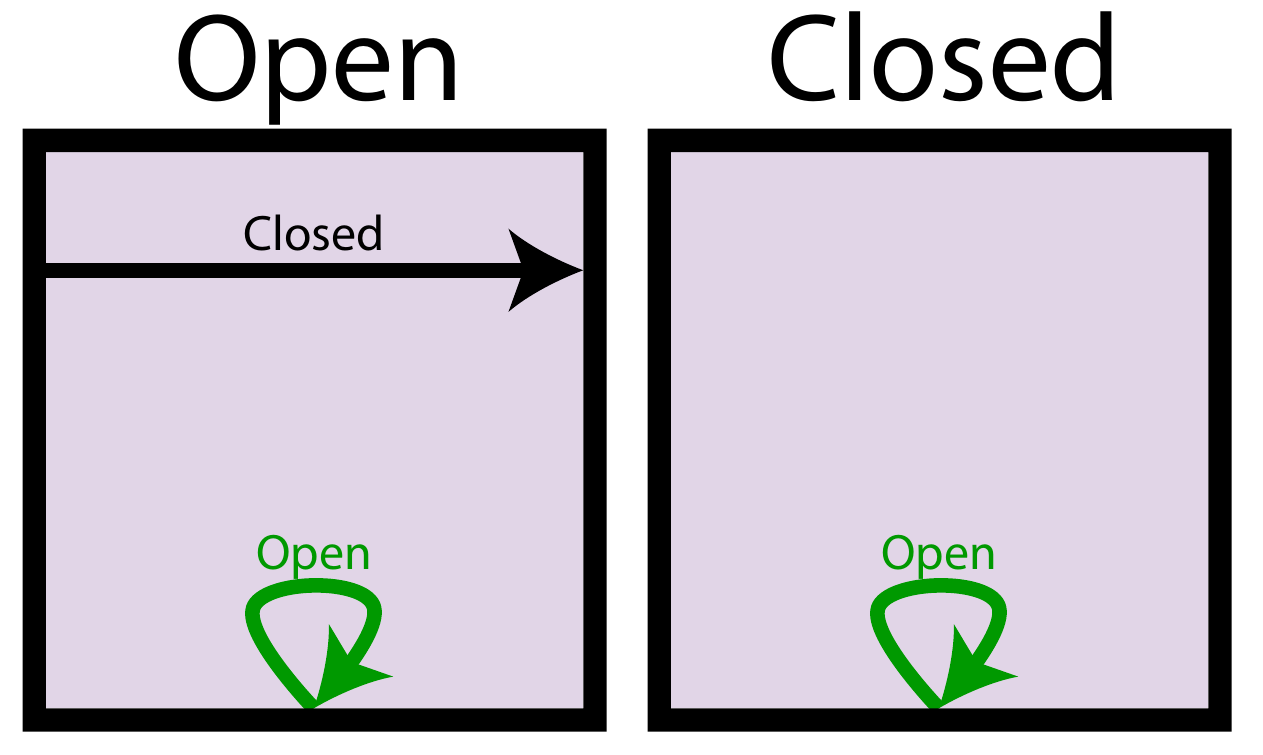}
        \caption{The self-closing door gadget}
    \end{subfigure}\hfil
    \begin{subfigure}[t]{0.3\linewidth}
        \includegraphics[width=\linewidth]{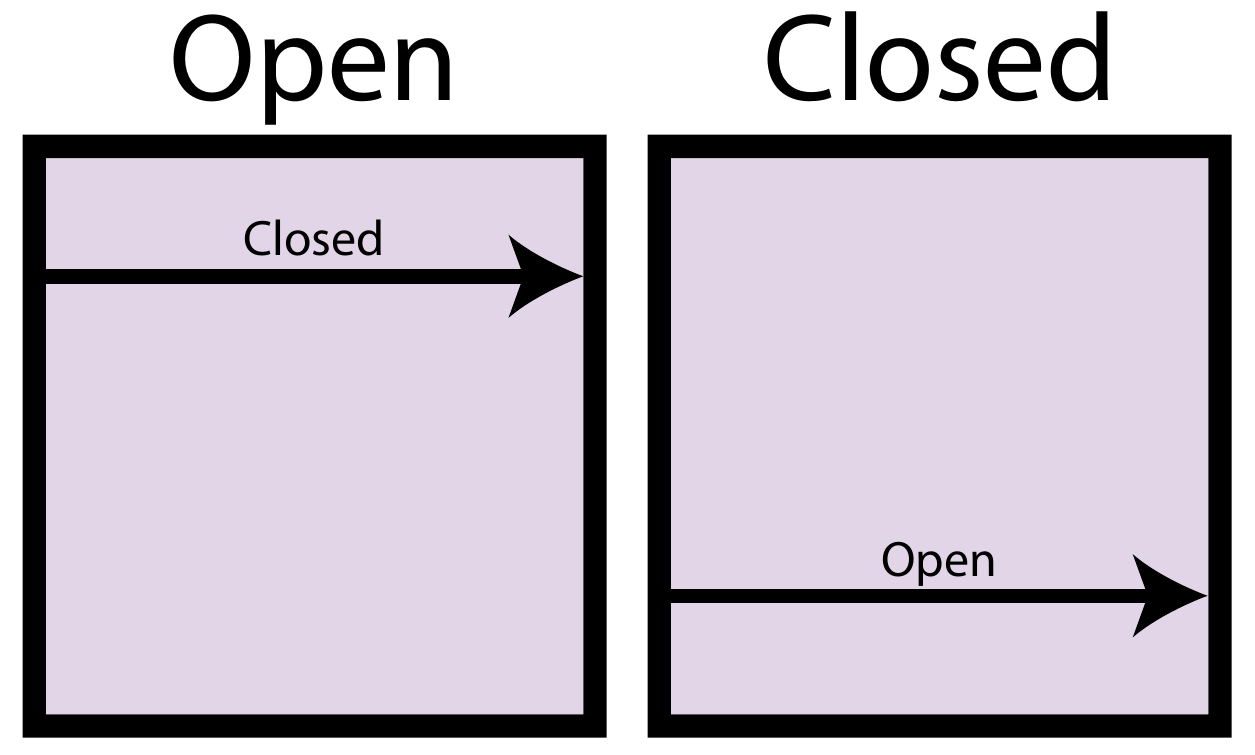}
        \caption{The symmetric self-closing door gadget}
    \end{subfigure}
    \caption{State diagrams for the PSPACE-complete gadgets used in this paper.
      Each box denotes a state (labeled Open or Closed), and each arrow denotes
      a possible transition in that state, labeled with the
      new state that the gadget enters upon such traversal.}
    \label{doors}
\end{figure}

Now we are given a \defn{system} of door gadgets,
consisting of several instances of the same type of door gadget,
an initial state for each gadget,
and a graph defining connections between locations
(entrances and exits) of the doors.
In a \defn{planar} system, these connections do not cross each other or
the door gadgets themselves.
The \defn{reachability} problem asks, given a system and two locations,
whether it is possible for the player to start at the first location
and reach the second location, by a sequence of traversals of door gadgets
and connection edges.
In \defn{planar reachability}, we restrict to planar systems of gadgets.

For all three types of door gadgets described above,
planar reachability is PSPACE-complete \cite{doors}.
(That paper defines one open-close door gadget
which it does not prove PSPACE-complete, but it has a non-optional open path.)
%For an agent to traverse from a given start to a given goal via a planar assembly of any of these doors, including variations on directedness and location of paths within the door, is PSPACE-hard \cite{doors}.
If we can build any door gadget in a Super Mario platform game,
then we can instantiate this gadget multiple times and connect them together
via tunnels made of blocks in such a way
that the only way for Mario to reach the flagpole is
via suitable traversal through these gadgets,
and thereby prove PSPACE-hardness of the Mario game.
%Hence, any algorithm that can determine whether the flagpole is reachable
%in a Mario level must also be able to determine whether a given planar assembly
%of door gadgets is traversable.
%Thus, if a Mario game contains door gadgets, it is PSPACE-hard, by a reduction from reachability with doors.

\section{Standard Open-Close Doors}
\label{sec:open-close}

We will begin by examining games which can build doors with a similar structure to that featured in \cite{smb1}. All of these doors are open-close doors. These doors have traverse and open on their left side, and close on the right. An enemy is hit from below while on a certain type of block such as a brick block to cause it bounce up and move between the two sides of the door. While on the left, the enemy blocks access to the traverse tunnel, and while on the right, the enemy blocks access to the close tunnel. There is some object which prevents Mario from crossing between the traverse and close tunnels but which allows for the enemy to pass through. In most of these games, the enemy in question becomes stunned for a constant amount of time when hit from below, and if the player is able to hit it and traverse quickly enough to pick it up, this could break the door. For this reason, we assume all tunnels in these gadgets are significantly longer than pictured, such that they take longer to traverse than it does for the enemy to wake up. This is indicated in our gadget figures with ellipses ($\cdots$).

\subsection{Super Mario Bros.: The Lost Levels}

This game was originally released in Japan as
\begin{CJK}{UTF8}{min}スーパーマリオブラザーズ２\end{CJK} (Super Mario Bros.\ 2),
but was renamed to \GAME{Super Mario Bros.: The Lost Levels}
when it was finally released in North America,
as part of the 1993 compilation \GAME{Super Mario All-Stars}.
We follow the latter naming convention, to avoid confusion
with the other game named (uniquely) \GAME{Super Mario Bros.\ 2}
(covered in Section~\ref{sec:smb2}).

\begin{theorem} \label{thm:smb2j}
Super Mario Bros.: The Lost Levels is PSPACE-hard.
\end{theorem}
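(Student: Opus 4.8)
The plan is to build an open-close door gadget directly in \GAME{Super Mario Bros.: The Lost Levels}, following the general template described at the start of Section~\ref{sec:open-close}, and then appeal to the doors framework of~\cite{doors}: since planar reachability with open-close door gadgets is PSPACE-complete, wiring polynomially many copies of the gadget together with tunnels of solid blocks yields a level that is completable if and only if the door system is solvable, giving PSPACE-hardness.

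Concretely, I would reuse the bouncing-enemy mechanism of the \GAME{Super Mario Bros.}\ door~\cite{smb1}, which \GAME{The Lost Levels} supports since it retains essentially the same physics, tiles, and enemy roster. A Koopa-type enemy rests on a brick block; striking that block from below knocks the enemy loose so that it migrates between a left chamber, where it blocks the traverse tunnel, and a right chamber, where it blocks the close tunnel. A one-way obstacle (a gap or pipe lip that the enemy can cross but Mario cannot) separates the traverse side from the close side. The open path is a short detour through the left chamber along which Mario may optionally hit the enemy upward (moving it to the blocking position for the traverse tunnel, i.e., ``opening'' the door) before returning to where it started; the close path runs through the right chamber and necessarily pushes the enemy to the ``closed'' position; the traverse path passes the left chamber and is passable only when the enemy is currently there, i.e., when the door is open. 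All tunnels are drawn much longer than shown (the $\cdots$ in the figure) so that any traversal takes strictly longer than the enemy's post-hit stun, preventing a fast player from catching or passing the enemy mid-reset; and, as in Section~\ref{general}, Mario is assumed to be in the non-powered state.

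The bulk of the argument, and the step I expect to be the main obstacle, is verifying that the mechanics \GAME{The Lost Levels} adds to or removes from \GAME{Super Mario Bros.}\ cannot be exploited. I would check in particular that: Luigi's higher, floatier jump (and any variable jump height) cannot let Mario clear the one-way obstacle, escape a dead-end chamber, or reach the enemy from an angle that corrupts its state; wind gusts cannot shove Mario or the enemy across the one-way obstacle or off the brick block --- the simplest fix is to forbid wind in the constructed level, which the generalized model of Section~\ref{general} permits; springboards, poison mushrooms, and any new carryable object never appear inside the gadget, so Mario gains no extra height, no projectile, and no way to destroy the enemy; and the eager red Piranha Plants that emerge even when Mario stands adjacent do not occur in the tunnels. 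I would also confirm that the gadget as drawn (traverse and open on the left, close on the right) admits a planar wiring, and that the level can be walled off above so that Mario cannot escape over the top despite the unbounded screen. Together with gadget correctness, these checks complete the reduction and prove Theorem~\ref{thm:smb2j}.
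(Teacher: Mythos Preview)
Your high-level approach matches the paper's: reuse the \GAME{Super Mario Bros.}\ open-close door of~\cite{smb1} verbatim in \GAME{The Lost Levels} and invoke the planar doors framework of~\cite{doors}. The paper's proof is in fact much shorter than yours: it simply observes that the engine is identical (same Mario physics, same spiny and firebar behavior), so the exact gadget from~\cite{smb1} works unchanged --- a spiny bounced across a central firebar, with traverse/open on the left and close on the right. Your extra caution about Luigi's jump, wind, poison mushrooms, etc., is not needed and the paper does not perform those checks.

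That said, two points in your write-up would need fixing before it stands as a proof. First, the specific objects matter: the paper uses a \emph{spiny} and a \emph{firebar}, not a generic ``Koopa-type enemy'' and a ``gap or pipe lip''. A Koopa bumped from below does not behave like a spiny here, and a gap that a walking enemy crosses but Mario cannot is not something the SMB1/Lost Levels tile set readily provides; the firebar is exactly the right separator because the spiny is immune to it while Mario is not. Second, your description of the state logic is inverted in two places: you say the open path moves the enemy ``to the blocking position for the traverse tunnel, i.e., `opening' the door'', and that the traverse path is ``passable only when the enemy is currently there''. Both statements describe the \emph{closed} state. In the correct gadget, the enemy on the \emph{left} blocks traverse (door closed); bumping it \emph{rightward} opens the door; bumping it \emph{leftward} from the close side is what lets Mario exit the close tunnel. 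With those corrections, your argument is the paper's argument.
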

\begin{proof}
\GAME{Super Mario Bros.: The Lost Levels} is almost exactly the same as \GAME{Super Mario Bros.} but with different levels, graphics, and only a few minor tweaks. In particular, the physics of Mario's movement is the same, and spinies and firebars behave in the same way. Hence, we can build the same door used in \cite{smb1} to show that \GAME{Super Mario Bros.: The Lost Levels} is hard. For reference, Figure~\ref{smb2j} shows such a gadget built in \GAME{Super Mario Bros.: The Lost Levels}.
The idea is that the door is open whenever the spiny \ICON{smb2j/spiny}
is on the right side of the central fire \ICON{smb2j/fire} (as in the figure),
in which case Mario can freely follow the traverse path.
To traverse the close path, Mario must jump to hit the brick block
\ICON{smb2j/brick} with careful timing so that the spiny
above gets bumped to the other side of the gadget,
closing the gadget and enabling Mario to reach the close exit.
Similarly, visiting the open path allows Mario to hit the
spiny to the other side, opening the gadget.
\end{proof}

\begin{figure}[ht]
    \centering
    \includegraphics[width=\linewidth]{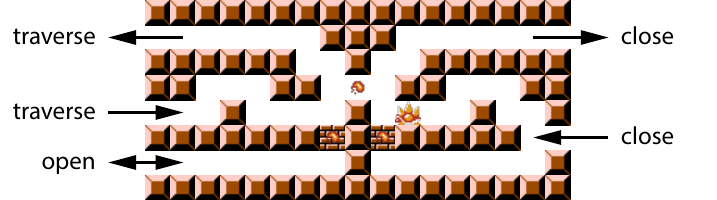}
    \caption{An open-close door in \GAME{Super Mario Bros.: The Lost Levels}}
    \label{smb2j}
\end{figure}

\subsection{Super Mario Bros.\ 3}

\begin{theorem} \label{thm:smb3}
Super Mario Bros.\ 3 is PSPACE-hard.
\end{theorem}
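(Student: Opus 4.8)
The plan is to reuse the open-close door template described at the start of this section, instantiated with the mechanics available in \GAME{Super Mario Bros.\ 3}. This game retains all of the \GAME{Super Mario Bros.}\ features the template needs: Mario in the non-powered state can run, jump, and bump a brick block from below, and an enemy standing on a bumped block is knocked upward and to the side. So I would place an enemy that is merely stunned---rather than destroyed or left in place---when the block beneath it is hit (a spiny or a Buzzy Beetle is the natural candidate) on top of a brick block between the two halves of the gadget, and use a rotating fire hazard (a firebar or a Roto-Disc) as the separator between the traverse/open side and the close side: Mario cannot pass the fire, but the enemy's knock-back arc clears it. The traverse and open paths go on the left and the close path on the right, exactly as in Figure~\ref{smb2j}; the open path is a dead-end loop with coincident entrance and exit, so Mario may optionally bump the enemy rightward to open the door, while the close path cannot be exited without bumping the enemy leftward.

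The key steps, in order, are: (1) pin down which \GAME{Super Mario Bros.\ 3} enemy, when the brick beneath it is bumped, is stunned and relocated to the opposite side rather than killed, and measure how long it stays down; (2) confirm the chosen fire hazard blocks Mario in both directions while not interfering with the enemy's bump trajectory; (3) lay out the three tunnels so that the traverse path is passable exactly when the enemy sits on the right (door open) and the close path is passable exactly when it sits on the left; (4) lengthen every tunnel (the $\cdots$ in the figure) so Mario cannot bump the enemy, run around, and reach it while it is still stunned---otherwise he could kick, carry, or re-bump it and desynchronize the state; and (5) audit the features \GAME{Super Mario Bros.\ 3} adds relative to \GAME{Super Mario Bros.}\ (slopes, note blocks, the P-switch, new enemies and blocks, and so on) to rule out any route that bypasses the fire hazard or the enemy, and check that nothing \GAME{Super Mario Bros.}\ provided and the template relies on has been removed or altered in a way that breaks the bump interaction.

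I expect step (5)---excluding unintended shortcuts---together with the precise timing verification in step (1) to be the main obstacle, since \GAME{Super Mario Bros.\ 3} changes the enemy roster and the block/enemy interaction table; this is precisely the ``careful adjustments and checking'' the introduction warns about, and it is the reason the gadget must be playtested rather than merely drawn. A secondary concern is that \GAME{Super Mario Bros.\ 3} despawns enemies that leave the screen, which would reset the door, but in the generalized game the screen can be as large as the entire level (Section~\ref{general}), so the door enemy never leaves the screen and this issue does not arise. Once the door gadget is verified, composing polynomially many copies according to a planar door system, with block tunnels for the connections as described in Section~\ref{general}, reduces planar reachability with open-close doors---PSPACE-complete by \cite{doors}---to completing a \GAME{Super Mario Bros.\ 3} level, which proves the theorem.
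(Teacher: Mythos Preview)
Your high-level plan---instantiate the Section~\ref{sec:open-close} open-close template with a spiny on a brick block---matches the paper's approach, but the concrete separator you propose is not the one the paper uses, and it is the weak point of your plan. \GAME{Super Mario Bros.\ 3} does not have the firebar from \GAME{Super Mario Bros.}; the closest analogue, the Roto-Disc, is a single orbiting sphere rather than a continuous bar, so Mario can simply wait for it to rotate out of the way and slip through. That breaks the ``Mario cannot pass the fire'' invariant your gadget relies on, and your step~(2) would fail at playtesting.

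The paper instead replaces the fire barrier with a static arrangement: a row of munchers sits in the middle, lethal to Mario but harmless to the spiny, and a cloud (semisolid platform) above the brick blocks catches the bumped spiny so it can walk across the munchers to the far side. To stop Mario from jumping onto the cloud himself and hopping over the spiny, invisible coin blocks cap the airspace over the traverse and close columns; Mario can only clear those columns by jumping right next to the munchers, which is blocked precisely when the spiny is on that side. The paper also addresses a subtlety you did not anticipate: Mario can hit the invisible coin blocks and permanently freeze the door open or closed, but this only restricts future reachability, so it never helps him. Apart from the separator mechanism, your outline (tunnel lengthening to outlast the stun, relying on the generalized screen to avoid despawns, composing doors via \cite{doors}) is consistent with what the paper does.
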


\begin{proof}
Figure~\ref{smb3} shows our door gadget for \GAME{Super Mario Bros.\ 3}. Notably, we make use of \emph{clouds} \ICON{smb3/cloud.png}, which are a semi-solid platform. Objects (including both Mario and enemies) can pass through clouds from below and the sides, but not from above. When Mario hits a brick block \ICON{smb3/brick.png} with a spiny on top of it, the spiny is bounced on top of the cloud. After a brief period of waking up, the spiny walks to the opposite side of the gadget. The black plants are munchers \ICON{smb3/muncher.png}, an enemy which will damage Mario if he touches them. Mario cannot pass above the munchers in the center of the gadget without touching them and dying, but the spiny is invulnerable to their effects and can pass through without issue. We also make use of invisible coin blocks \ICON{smb3/coin.png}. When Mario hits these from below, they become solid blocks, but before that point, they are completely intangible, and the spiny can pass through them from the side with no issues. Their purpose is to restrict Mario from jumping over the spiny \ICON{smb3/spiny.png} and onto the cloud. Because of them, Mario can only pass through the traverse or close tunnel by jumping next to the munchers, and this can only happen if the spiny is on the opposite side of the gadget.

\begin{figure}[ht]
    \centering
    \includegraphics[width=\linewidth]{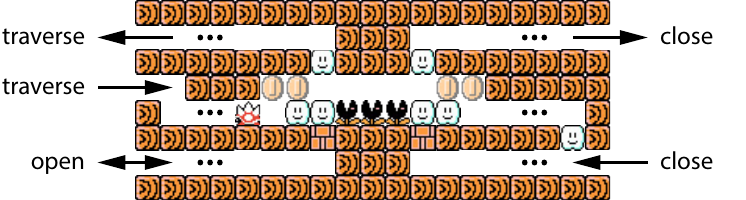}
    \caption{An open-close door in \GAME{Super Mario Bros.\ 3}}
    \label{smb3}
\end{figure}

One might worry that Mario can hit the coin blocks on traversal, permanently breaking the gadget. This is true. If Mario hits a coin block, the gadget will enter a broken state where it is stuck open or closed, depending on the location of the spiny. However, this is not an issue, as doing so as it can only make the reachability problem more restrictive. If Mario hits a coin block while traversing on the left side of the gadget, the spiny will never fall down the left side, and the door will be permanently open. This means that if Mario attempts to close the door, he will become stuck and forced to backtrack or die. A similar argument shows that it is never in Mario's interest to force the door into a stuck closed state.
\end{proof}

\subsection{New Super Mario Bros.\ Series}

\begin{corollary} \label{cor:nsmb}
\GAME{New Super Mario Bros.}, \GAME{New Super Mario Bros.\ Wii},
\GAME{New Super Mario Bros.\ 2}, and \GAME{New Super Mario Bros.\ U}
are all PSPACE-hard.
\end{corollary}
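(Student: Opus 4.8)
The plan is to reduce the \GAME{New Super Mario Bros.}\ series to \GAME{Super Mario Bros.\ 3}: since all four of these games share essentially the same mechanics, I would argue that the open-close door gadget of Figure~\ref{smb3} transplants with only cosmetic changes. First I would list the gadget ingredients used in the \GAME{Super Mario Bros.\ 3} construction --- brick blocks, a spiny sitting on a brick block, a semi-solid platform (cloud), munchers, and invisible coin blocks --- and verify that each has a direct analogue in all four \GAME{New Super Mario Bros.}\ games, together with the crucial physical behaviors: bumping a brick from below launches the spiny onto the semi-solid above, the spiny is invulnerable to munchers while Mario is not, the semi-solid is passable from the sides and below but not from above, and the invisible coin block is intangible to the spiny from the side but becomes solid when Mario bumps it. Second, I would re-run the two gadget-correctness arguments from the proof of Theorem~\ref{thm:smb3} in each game: (i) the traverse and close tunnels can be entered only by jumping alongside the munchers, which is possible exactly when the spiny is on the far side, and (ii) a Mario who bumps a coin block only worsens his own position (stuck-open or stuck-closed), so such ``misuse'' never helps. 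Third, I would invoke the reduction from planar reachability with open-close door gadgets described in Section~\ref{sec:open-close}: instantiate the door gadget polynomially many times, wire the entrances and exits with block tunnels following the planar embedding, and place the flagpole at the target location, so that Mario reaches the flag iff the reachability instance is a yes-instance.

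I would carry out the steps in that order --- catalog the mechanics, check the gadget behaves identically, then cite the generic framework --- because the framework step is purely black-box once the gadget works. The main obstacle I anticipate is the physics audit: the \GAME{New Super Mario Bros.}\ games alter Mario's running and jumping parameters, add a wall-jump and a ground-pound, and tweak enemy behavior (e.g.\ some spinies can be flipped by a ground-pound through a semi-solid, and some games' spiky enemies differ), so I must confirm that none of these additions lets Mario bypass the munchers, reach the cloud over the spiny, or interact with the spiny from above through the semi-solid in an unintended way. In particular I would check that the invisible coin blocks still block Mario's jump arc given the (slightly higher) \GAME{New Super Mario Bros.}\ jump, lengthening the tunnels and widening the coin-block barrier as needed, just as the ellipsis convention in Section~\ref{sec:open-close} already allows. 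Since each of the four games in the corollary is a near-identical engine to the others, a single such audit --- with notes on the handful of per-game discrepancies (e.g.\ the wall-jump being absent in one title, or a different semi-solid art asset) --- suffices, and the proof reduces to: ``the gadget of Figure~\ref{smb3} is realizable with the same correctness proof in each of these games, hence PSPACE-hardness follows as in Theorem~\ref{thm:smb3}.''
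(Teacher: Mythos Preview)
Your proposal is correct and follows essentially the same approach as the paper: reuse the \GAME{Super Mario Bros.\ 3} open-close door of Figure~\ref{smb3} in each of the four \GAME{New Super Mario Bros.}\ games after swapping in the appropriate analogue objects. The paper's own proof is even terser than yours---it simply notes that clouds are replaced by a different semisolid platform and that some of these games use spikes in place of munchers, then points to Figure~\ref{nsmbw} as a reference implementation---so your more careful physics audit (wall-jump, ground-pound, jump height) goes beyond what the paper spells out, but in the same direction.
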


\begin{proof}
The door from Theorem~\ref{smb3}
works in all of the \GAME{New Super Mario Bros.}\ games for the same reasons.
In place of clouds, we use a different semisolid platform \ICON{nsmbw/semisolid.png}, and some of the \GAME{New Super Mario Bros.}\ games feature spikes \ICON{nsmbw/spike.png} instead of munchers \ICON{nsmbw/muncher.png}, but the objects behind these graphical changes function in the same way. For reference, Figure~\ref{nsmbw} shows a door built in \GAME{New Super Mario Bros.\ Wii}.
\end{proof}

\begin{figure}[ht]
    \centering
    \includegraphics[width=\linewidth]{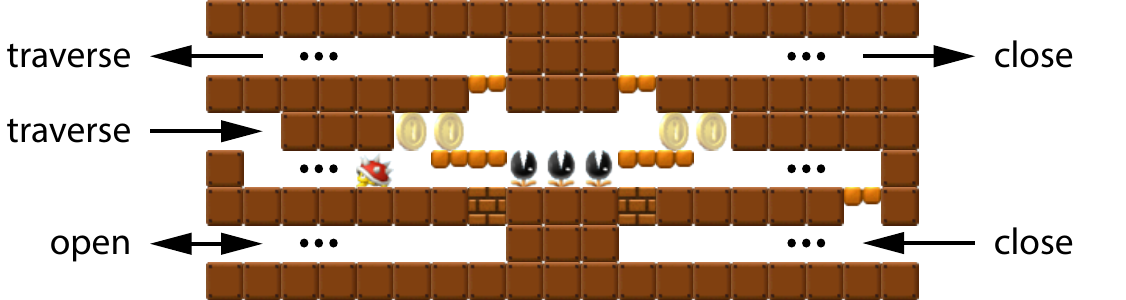}
    \caption{An open-close door in \GAME{New Super Mario Bros.\ Wii}}
    \label{nsmbw}
\end{figure}

We will give another proof of this result in Section~\ref{sec:event sscd},
which uses different mechanics and is furthermore robust against the mechanics of Wii U gamepads.

\subsection{Super Mario World}

\begin{theorem} \label{thm:smw}
Super Mario World is PSPACE-hard.
\end{theorem}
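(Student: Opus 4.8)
Proof proposal for Theorem~\ref{thm:smw} (Super Mario World is PSPACE-hard).

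The plan is to build an open-close door gadget for \GAME{Super Mario World} that follows the same template described at the start of Section~\ref{sec:open-close}: traverse and open paths on the left, close path on the right, with a bouncing enemy that sits on a semisolid platform and shuttles between the two sides when bumped from below. The first step is to identify which \GAME{Super Mario World} objects play the roles of the abstract pieces. For the enemy, I would use a spiny (or another enemy that is flipped/stunned rather than killed when a block beneath it is hit, and that walks off in one direction after recovering); \GAME{Super Mario World} retains spinies thrown by Lakitu, and the relevant bump-from-below behavior is the same as in \GAME{Super Mario Bros.}, so this should transfer. For the block the enemy stands on, I would use a turn block (the rotating ``?''-style block) or an ordinary brick-equivalent that, when hit from below, ejects the enemy upward without being destroyed. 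For the central barrier that blocks Mario but not the enemy, munchers are available in \GAME{Super Mario World} and behave like the munchers in the \GAME{Super Mario Bros.~3} construction; alternatively a firebar/other hazard could serve. For the semisolid platform the enemy walks along, \GAME{Super Mario World} has mushroom-shaped semisolid platforms (passable from below and the sides, solid from above), exactly as needed.

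The second step is to handle the new mechanics \GAME{Super Mario World} introduces that could break the gadget, and to argue they do not. The obvious dangers are: (1) the spin jump, which lets Mario bounce on munchers and many enemies harmlessly — I would neutralize this by making the central hazard something a spin jump cannot defeat (e.g.\ a sufficiently wide muncher/lava field, or by capping vertical clearance so a spin jump still lands Mario on the hazard), and by ensuring that spin-jumping onto the enemy either kills it in a way equivalent to the intended bump or is likewise blocked by low ceilings; (2) the cape/feather flight and the P-balloon — these are powerups, and by the convention established in Section~\ref{sec:open-close} and the ``Forbidding Powerups'' subsection, we assume Mario enters each level and each gadget in the small, non-powered state, so these do not arise; (3) Yoshi — Yoshi is likewise something Mario does not bring into a gadget (or we forbid Yoshi eggs/blocks in the construction), and tunnels can be made too low for a mounted Yoshi; (4) carrying the stunned enemy before it recovers — as in the other constructions, we make all tunnels long enough (the $\cdots$ convention) that Mario cannot reach the platform and grab the enemy before it wakes and walks away. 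As in the \GAME{Super Mario Bros.~3} proof, if some ``invisible'' or coin-style block is needed to stop Mario from jumping over the enemy onto the platform, I would include the analogous \GAME{Super Mario World} object and argue, exactly as there, that hitting it only makes reachability more restrictive (the door gets stuck open or closed, which never helps Mario).

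The third step is the standard wrap-up: having established the open-close door gadget, invoke PSPACE-completeness of planar reachability with open-close door gadgets~\cite{doors} and the argument sketched at the end of Section~\ref{general} — instantiate many copies, wire them with block tunnels respecting planarity, with Mario's start and the flagpole as the two designated locations — to conclude PSPACE-hardness. I would also cite the playtested level file / video as corroboration of the physics claims. I expect the main obstacle to be the case analysis in the second step, specifically ruling out the spin jump: unlike \GAME{Super Mario Bros.}, in \GAME{Super Mario World} Mario has a height-varying, enemy-clearing jump, so I would need to check carefully that no combination of spin jump, running speed, and the enemy's recovery timing lets Mario either bypass the central hazard or reach the platform and pocket the enemy; getting the tunnel lengths and ceiling heights right so that every ``cheat'' either fails or is provably useless is where the real work lies.
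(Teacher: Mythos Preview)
Your template is right, but two of your concrete object choices fail in \GAME{Super Mario World} physics, and fixing them forces a structural change you did not anticipate. First, spinies in \GAME{Super Mario World} are \emph{not} bumped when the block beneath them is hit: they die instantly. So the primary enemy you propose cannot be used; the paper uses the \GAME{Super Mario World} goomba, which in this game is stunned (not killed) by a bump and then resumes walking. Second, your proposed turn block is exactly the object that breaks: when hit it spins and becomes temporarily intangible, so the enemy (or Mario) could fall through. The paper instead bumps the goomba off an on/off switch block, which stays solid when struck.

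Swapping in a goomba creates a new problem you did not list: unlike a spiny, Mario can safely jump on a goomba, so nothing in your layout prevents Mario from simply stomping or hopping over the enemy to pass a blocked tunnel. The paper's fix is to line the \emph{ceilings} of the traverse and close tunnels with munchers, so that any bounce off the goomba launches Mario into them; the goomba itself walks across the central muncher strip unharmed. Your spin-jump worry is in the right spirit, but the actual danger and its remedy are this ceiling-muncher trick, not the central-hazard width. With these three substitutions (goomba for spiny, on/off switch for turn block, ceiling munchers to punish stomping), the rest of your outline---long tunnels to outlast the stun timer, forbidding powerups and Yoshi, then invoking planar open-close-door reachability from~\cite{doors}---matches the paper's argument.
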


\begin{proof}
In \GAME{Super Mario World}, we cannot use spinies because they die instantly upon being hit from below. Instead, we make use of the goomba \ICON{smw/goomba.png} enemy, which behaves differently in \GAME{Super Mario World} in that it becomes stunned instead of killed when hit. Figure~\ref{smw} shows the resulting door gadget. Unlike with spinies, Mario can safely jump on goombas. To ensure Mario does not jump over the goomba or step on it to stun it and pass through a blocked tunnel, we place munchers \ICON{smw/muncher.png} at the top of the traversal tunnels. If Mario would jump on the goomba, it will bounce him into the munchers, killing him instantly. The only valid way to traverse the gadget is to first open it by hitting the on/off block \ICON{smw/switch.png} from below when a goomba is on it. Doing so will cause the goomba to bounce onto the munchers in the center of the gadget and walk across to the other side, allowing safe traversal. The close traversal works analogously. As a remark, the on/off switches contain extra functionality which we completely ignore here; we only care about them in terms of their ability to bounce goombas when hit from below.
\begin{figure}[ht]
    \centering
    \includegraphics[width=\linewidth]{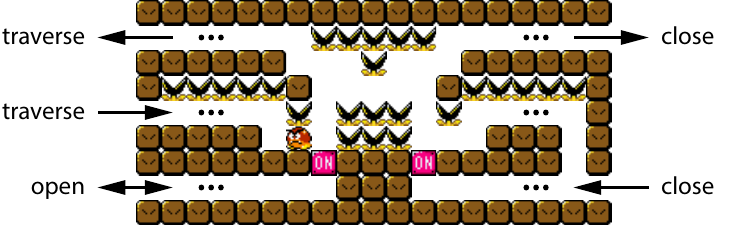}
    \caption{An open-close door in \GAME{Super Mario World}}
    \label{smw}
\end{figure}
\end{proof}

\subsection{Super Mario Maker}

\begin{theorem} \label{thm:smm}
All styles of Super Mario Maker are PSPACE-hard.
\end{theorem}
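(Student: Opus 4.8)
The plan is to reduce to the four standalone games that \GAME{Super Mario Maker} recreates. \GAME{Super Mario Maker} lets the player author a level in one of four \emph{styles}, each reproducing the graphics, objects, and (approximately) the physics of an earlier game: \GAME{Super Mario Bros.}, \GAME{Super Mario Bros.\ 3}, \GAME{Super Mario World}, and \GAME{New Super Mario Bros.\ U}. For each style I would simply rebuild the open-close door already constructed for the corresponding game: the door of \cite{smb1} in the \GAME{Super Mario Bros.}\ style, the door of Theorem~\ref{thm:smb3} in the \GAME{Super Mario Bros.\ 3} style, the door of Theorem~\ref{thm:smw} in the \GAME{Super Mario World} style, and the door of Corollary~\ref{cor:nsmb} in the \GAME{New Super Mario Bros.\ U} style. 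Once a door gadget is in hand for a given style, PSPACE-hardness of that style follows from the reachability-with-doors framework exactly as everywhere else in Section~\ref{sec:open-close}: instantiate the door polynomially many times, wire the instances together with block tunnels according to a hard instance of planar reachability, and let the goal object (flagpole/axe/gate) play the role of the target location.

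The actual work is a style-by-style audit. I would check that every ingredient of the relevant gadget is present in \GAME{Super Mario Maker}'s version of that style and behaves identically there --- the bouncing, temporarily-stunned enemy (spiny, goomba, \dots); the block struck from below that launches it (brick block, on/off block); the semisolid platform it then walks across; the one-way barrier between the traverse and close tunnels (munchers or spikes, plus hidden/coin blocks); and, most delicately, that the style's jump arc and collision handling still rule out the unintended shortcuts identified in Section~\ref{sec:open-close} (jumping over the enemy, carrying a stunned enemy through a tunnel, or walking above the munchers). Because \GAME{Super Mario Maker} emulates each style only approximately and also adds and removes objects relative to the originals, a few ingredients may have to be swapped for equivalents --- for instance using a different semisolid, a different stunnable enemy, or, if on/off blocks are unavailable in a style, another block that bounces an enemy when hit from below --- but such substitutions preserve the gadget's logic, since Section~\ref{sec:open-close} uses these objects purely for their bump/stun/one-way behavior.

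The main obstacle is precisely this case analysis rather than any new idea: \GAME{Super Mario Maker}'s per-style physics are close to but not bit-exact copies of the originals, so for each of the four styles both buildability and robustness-under-misuse of the door must be re-verified from scratch, and I expect the \GAME{Super Mario World} style (with its spin jump and slightly different enemy and block interactions) to need the most care. We have playtested representative builds in all four styles, and in each one a correct open-close door can be assembled from the available object set, which completes the argument.
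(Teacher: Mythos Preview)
Your approach is genuinely different from the paper's, and it has a concrete gap in the \GAME{Super Mario World} style.

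The paper does \emph{not} build four separate doors. It builds a single door --- essentially the \GAME{Super Mario Bros.\ 3} door of Theorem~\ref{thm:smb3} with one added ingredient --- and argues that every element of that one gadget exists and behaves correctly in all four styles simultaneously. The added ingredient is a pair of one-way barriers placed immediately above and below each brick block. The reason for this addition is precisely the failure mode your per-style plan runs into: in the \GAME{Super Mario World} style, the block that plays the role of the brick block (the turn block) becomes briefly intangible when struck from below, so without the one-ways either the spiny can fall through or Mario can pass up through it, breaking the door. The one-ways close off both leaks while still letting Mario bump the enemy from below.

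Your proposal for the \GAME{Super Mario World} style is to port the door of Theorem~\ref{thm:smw}, which bounces a goomba off an on/off switch. But on/off switches are not part of \GAME{Super Mario Maker}'s object set (they first appear in \GAME{Super Mario Maker 2}; cf.\ the remark in Corollary~\ref{cor:smm2-2d}). You anticipate this and say you would swap in ``another block that bounces an enemy when hit from below,'' but the available such block in that style is exactly the turn block whose intangibility-on-hit is the problem. So the substitution does \emph{not} preserve the gadget's logic, and this is not a matter of re-checking jump arcs: it is a new mechanic that defeats the gadget unless you introduce the one-way fix the paper uses. Your write-up flags the \GAME{Super Mario World} style as the one needing ``the most care,'' which is correct, but the missing idea is specifically the one-way sandwich around the bumpable block; once you have that, you may as well use the paper's uniform gadget and drop the four-way case split entirely.
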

\begin{proof}
Our door gadget for \GAME{Super Mario Maker} is almost identical to the door gadget for \GAME{Super Mario Bros.\ 3}. The main difference is that we add \emph{one-ways} \raisebox{-0.6ex}{\includegraphics[height=\baselineskip]{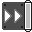}}, which only allow one-directional traversal, immediately below and above the brick blocks \ICON{smm/brick.png}. This is because, in the \GAME{Super Mario World} style, the blocks corresponding to brick blocks temporarily become intangible when hit, allowing a spiny or Mario to pass through. The one-ways ensure that Mario cannot go up and the spinies cannot go down through these blocks, but Mario maintains the ability to bounce the spiny from below. All features present in this gadget exist in all four styles of \GAME{Super Mario Maker}, so this gadget works in all of them. Figure~\ref{smm} shows a construction of the gadget.

\begin{figure}[ht]
    \centering
    \includegraphics[width=\linewidth]{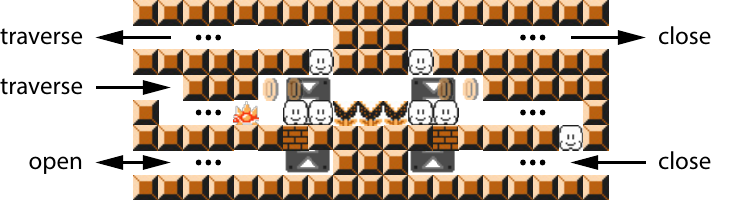}
    \caption{An open-close door in \GAME{Super Mario Maker}}
    \label{smm}
\end{figure}
\end{proof}

\begin{corollary} \label{cor:smm2-2d}
The four 2D styles Super Mario Maker 2 are PSPACE-hard.
\end{corollary}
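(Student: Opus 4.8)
The plan is to observe that this is essentially immediate from Theorem~\ref{thm:smm}. The four 2D game styles available in \GAME{Super Mario Maker 2} --- the \GAME{Super Mario Bros.}, \GAME{Super Mario Bros.\ 3}, \GAME{Super Mario World}, and \GAME{New Super Mario Bros.\ U} styles --- are exactly the four styles already present in \GAME{Super Mario Maker}, and every game element used in the \GAME{Super Mario Maker} open-close door gadget of Theorem~\ref{thm:smm} (brick blocks, spinies, munchers, semisolid platforms, invisible coin blocks, and one-ways) appears in all four of these styles in \GAME{Super Mario Maker 2} with unchanged behavior. So I would simply reuse the same gadget and the same correctness argument: in particular, the observation (inherited from the \GAME{Super Mario Bros.\ 3} proof) that breaking a coin block only makes the reachability problem more restrictive and is therefore never to Mario's advantage carries over verbatim. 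Combined with the reduction from planar reachability with door gadgets described in Section~\ref{general}, this yields PSPACE-hardness of each of the four 2D styles.

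The only point requiring care --- and hence the main, albeit minor, obstacle --- is confirming that nothing newly introduced or altered in \GAME{Super Mario Maker 2} relative to \GAME{Super Mario Maker} interferes with the gadget: for example, whether brick blocks, one-ways, or the semisolid platform used to bounce the spiny behave any differently in the updated engine, and whether any new global mechanic in these styles could let Mario bypass a tunnel or reach the cloud without a correctly positioned spiny. Since \GAME{Super Mario Maker 2} is, for these four styles, essentially a superset of \GAME{Super Mario Maker}, I expect this check to be routine, and it is further corroborated by our playtesting. (The fifth style, \GAME{Super Mario 3D World}, is deliberately excluded from this corollary and is instead handled by Theorem~\ref{thm:smm2} using different gadgets, since it has substantially different mechanics.)
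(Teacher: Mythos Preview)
Your proposal is correct and matches the paper's own proof: the paper likewise observes that all elements of the \GAME{Super Mario Maker} door gadget from Theorem~\ref{thm:smm} appear in each of the four 2D styles of \GAME{Super Mario Maker 2} with the same behavior, so the same gadget works verbatim. The only addition in the paper is a parenthetical remark that, optionally, the one-ways-plus-brick-block construction can be simplified using the on/off switches introduced in \GAME{Super Mario Maker 2}, but this is not essential to the argument.
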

\begin{proof}
Each of the four 2D styles of \GAME{Super Mario Maker 2} features all elements of the door gadget pictured in Figure~\ref{smm}, behaving in the same way, so we use the same gadget gadget from \GAME{Super Mario Maker}. As a small detail, if we choose, we can simplify the construction involving 2 one-ways and a brick block by using the on/off switches present in \GAME{Super Mario Maker 2} since they cannot be passed through in any game style and bounce spinies in the same way (these are the same type of blocks we used in the reduction to show that \GAME{Super Mario World} is PSPACE-hard).
\end{proof}

\begin{theorem} \label{thm:smm2}
    All styles of Super Mario Maker 2 are PSPACE-hard.
\end{theorem}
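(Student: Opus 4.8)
The plan is to combine Corollary~\ref{cor:smm2-2d} with a single new construction. Corollary~\ref{cor:smm2-2d} already gives PSPACE-hardness for the four 2D styles of \GAME{Super Mario Maker 2} (the \GAME{Super Mario Bros.}, \GAME{Super Mario Bros.\ 3}, \GAME{Super Mario World}, and \GAME{New Super Mario Bros.\ U} styles), so the only thing left is the fifth, \GAME{Super Mario 3D World}, style. Although that style is still a 2D side-scrolling game, it replaces a large part of the enemy and block roster used by the other styles, so we should not expect any earlier gadget (in particular the \GAME{Super Mario Bros.\ 3}-style door of Figure~\ref{smm}, which relies on spinies and on a specific semisolid platform) to transfer verbatim.

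First I would list the ingredients that the standard open-close door of Section~\ref{sec:open-close} actually needs, and then exhibit substitutes available in the \GAME{Super Mario 3D World} style: (i) a walking enemy that is stunned rather than killed, and is launched upward onto a platform, when a block directly beneath it is hit from below; (ii) a semisolid platform that this enemy, and Mario from below or the sides, can pass through, but onto which Mario cannot land from above; (iii) a central hazard (munchers or spikes) that kills Mario on contact but that the chosen enemy walks through unharmed; and (iv) a barrier --- hidden-until-hit blocks and/or one-way walls, as in the \GAME{Super Mario Maker} gadget --- preventing Mario from jumping over the enemy onto the platform. With these substitutions, the door layout and its correctness argument are exactly as in the \GAME{Super Mario Bros.\ 3}/\GAME{Super Mario Maker} case: the door is open precisely when the enemy rests on the far side of the hazard; the open and close paths let Mario bump the enemy from one side to the other with careful timing; and, just as in the proof of Theorem~\ref{thm:smb3}, any ``misuse'' (solidifying a hidden block at the wrong moment, or stranding the enemy on one side) only makes Mario's reachability problem more restrictive and is therefore never advantageous. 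Plugging the resulting door into the doors-to-reachability reduction of \cite{doors} then finishes the proof.

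I expect the main obstacle to be the game-specific physics check for the \GAME{Super Mario 3D World} style: confirming that some enemy in that style does bounce cleanly between the two sides when struck from below and is immune to the chosen hazard, that the chosen semisolid has the required one-directional solidity, and --- most delicately --- that no movement option peculiar to this style (wall-jumping, picking up and throwing a crate, the Cat suit's wall climb, trampolines, etc.) lets Mario bypass the intended traversal restriction. As elsewhere in the paper, this construction would be validated by playtesting in the released game (Section~\ref{general}).
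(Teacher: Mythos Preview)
Your reduction to Corollary~\ref{cor:smm2-2d} is exactly right, and your instinct that the \GAME{Super Mario 3D World} style is the only remaining case is correct. The gap is in how you handle that case. You propose to replicate the standard open-close door of Section~\ref{sec:open-close} by finding substitute parts satisfying (i)--(iv), with the enemy walking across the central hazard on its own after being bumped up. But precisely this step fails in the \GAME{Super Mario 3D World} style: when a spiny there is bumped and wakes up, it \emph{reverses direction}, so it never crosses to the opposite side unassisted. You flagged ``bounces cleanly between the two sides'' as something to verify, but it is not a detail to be checked --- it is a genuine obstruction that blocks the whole template.

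The paper's construction therefore departs from the standard door in two substantive ways. First, it adds a global on/off switch controlling a layer of toggled blocks above the central region; after bumping the spiny up, Mario toggles the switch to create a solid floor for the spiny to walk across, then toggles again to drop it on the far side. Second, this introduces new failure modes (the spiny can be crushed by a toggled block, and Mario might try to walk across the toggled floor himself), which are handled not by the ``misuse is only more restrictive'' argument you invoke, but by a separate \emph{check gadget}: each spiny carries a key, a second key sits where munchers would normally go, and a locked warp box at the end of the level traps Mario if he holds any key. So the proof is not a parts-swap on Figure~\ref{smm}; it requires a new mechanism for moving the enemy and a new global enforcement device, neither of which your outline anticipates.
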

\begin{proof}
Corollary~\ref{cor:smm2-2d} proves hardness for all 2D styles. There is one remaining style present in \GAME{Super Mario Maker 2}, the \GAME{Super Mario 3D World} style. This style works very differently from the other four styles, so we construct a different door gadget, shown in Figure~\ref{smm3dw}. The issue in the \GAME{Super Mario 3D World} style is that the spiny \ICON{smm3dw/spiny.png} changes direction when it wakes up and is unable to cross the gadget on its own. Fortunately, there are on/off switches \ICON{smm3dw/switch.png} which toggle the state of the blue squares \ICON{smm3dw/blue.png} pictured. When the switch is on, as pictured, the blue squares are empty. When the switch is off, they are replaced by solid blue blocks. To cause the spiny to change sides, the player hits the brown brick block \ICON{smm3dw/brick.png}, bounding the spiny into the center of the gadget. The player can then toggle the state of the switched blocks to allow the spiny to walk across to the other side of the gadget. Then, toggling the switch again will drop the spiny into the desired side. Hitting the switch toggles the state of all blue blocks in the level, including those in other gadgets, but these do not interact with the other gadgets because the blue blocks are above the spinies, so only the door where the player hits the spiny is affected.

With the addition of the switched blocks, it is possible for Mario to hit the switch at such a time that the spiny is crushed by the blocks. If this happens, the door can no longer be closed, which is bad. In \GAME{Super Mario Maker}, enemies can be created holding keys \ICON{smm3dw/key.png} which Mario will collect if the spiny dies. Mario will keep this key and any others he collects until they are used on locked doors or warp blocks \ICON{smm3dw/keybox.png}. We give the spiny in our gadget a key, and at the very end of the level we force Mario through a ``check" gadget which consists of a 1 tall path with a locked warp box (see Figure~\ref{smm3dw-check}). If the player has a key, they get warped and become completely stuck. If they have no key, they pass through with no issue. Taking advantage of the check gadget, we use a second key to prevent the player from switching between the two sides of the gadget, taking the place of the munchers in the previous reduction.
\end{proof}

\begin{figure}[ht]
    \centering
    \includegraphics[width=\linewidth]{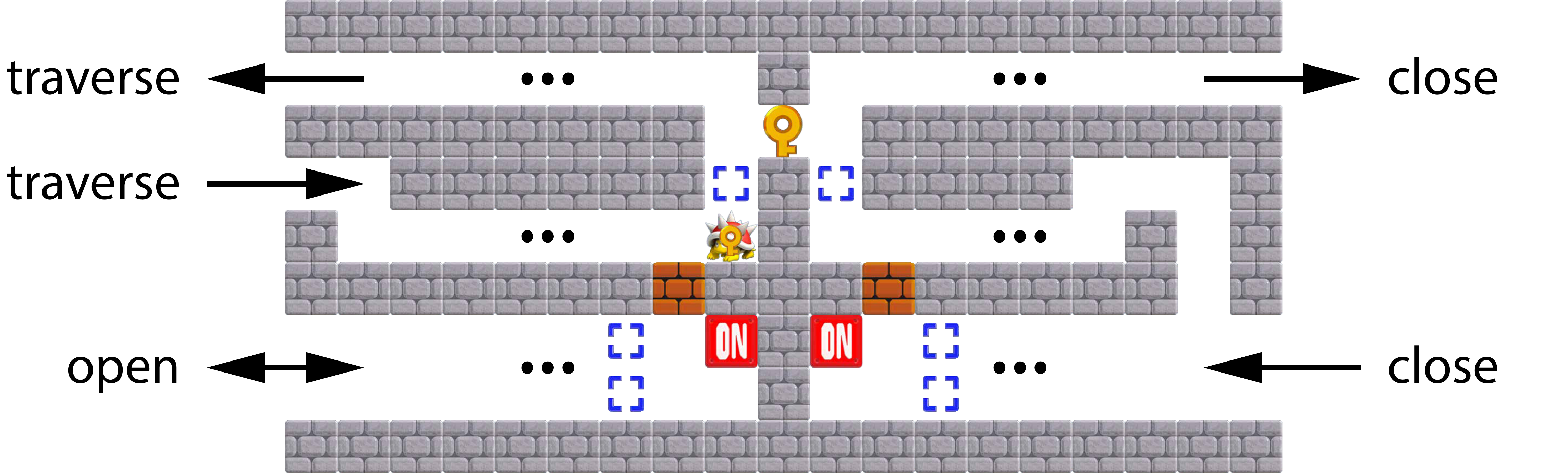}
    \caption{An open-close door in the Super Mario 3D World style of \GAME{Super Mario Maker 2}}
    \label{smm3dw}
\end{figure}
\begin{figure}[ht]
    \centering
    \includegraphics[width=0.5\linewidth]{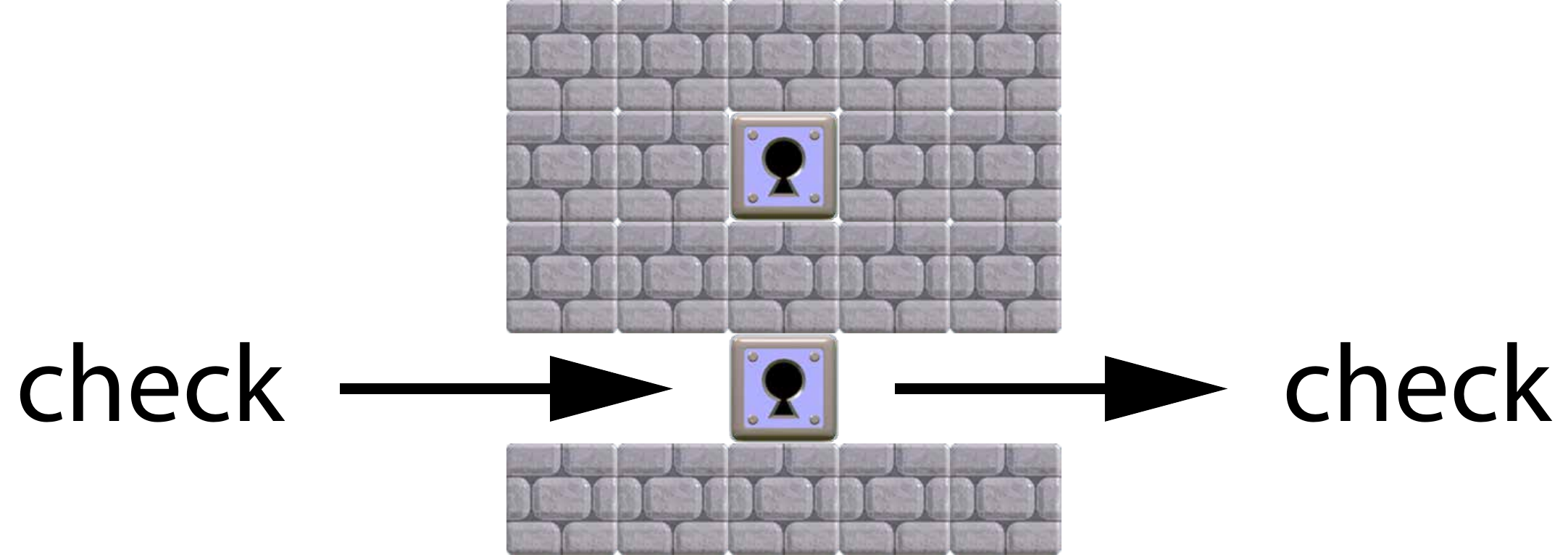}
    \caption{The check gadget for the Super Mario 3D World style of \GAME{Super Mario Maker 2}}
    \label{smm3dw-check}
\end{figure}

\FloatBarrier
\subsection{Super Mario Bros.\ Wonder}

\begin{theorem} \label{thm:smbw}
Super Mario Bros.\ Wonder is PSPACE-hard.
\end{theorem}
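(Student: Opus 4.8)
The plan is to reuse, essentially verbatim, the open-close door from Theorem~\ref{thm:smb3} --- the same one used for the \GAME{New Super Mario Bros.}\ series in Corollary~\ref{cor:nsmb} --- since \GAME{Super Mario Bros.\ Wonder} is another side-scroller whose core movement physics and basic enemy and block roster closely match those games. The door mechanism needs: an enemy that survives a hit from below, bounces up onto a semisolid platform, and (after a constant wake-up delay) walks to the opposite side of the gadget; a strip of munchers or spikes through the middle that is harmless to the enemy but instantly lethal to powerless Mario; brick blocks whose upward bump launches an enemy standing on them; and hidden coin blocks that are intangible until struck from below, blocking Mario from vaulting over the enemy onto the semisolid while still letting him bump the enemy across. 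First I would confirm that \GAME{Super Mario Bros.\ Wonder} contains each of these ingredients and substitute its versions where names or art differ, exactly as in Corollary~\ref{cor:nsmb}; if Wonder's most convenient enemy is stompable-but-stunnable rather than spiny-like, I would instead follow the \GAME{Super Mario World} variant of Theorem~\ref{thm:smw}, adding a muncher ceiling over the tunnel mouths. As in Section~\ref{sec:open-close}, all tunnels are drawn far longer than shown so the enemy's wake-up time is dominated by traversal time, and the ``Mario breaks the gadget by hitting a coin block'' case is dispatched by the same monotonicity argument as in Theorem~\ref{thm:smb3}: forcing the door stuck-open or stuck-closed only makes reachability harder.

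The features needing real attention are the ones unique to Wonder. Wonder Effects, triggered by Wonder Flowers, can radically rewrite a level's physics, so we simply include no Wonder Flower anywhere, leaving every level in its ordinary state throughout. The Elephant, Bubble, and Drill power-ups (and any mushroom) are handled by our powerup convention from Section~\ref{general}: the level opens by forcing Mario down to the powerless state. The remaining wrinkle is Badges --- the pre-level ability slots such as the wall-climb jump, parachute cap, and floating high jump --- several of which would let Mario scale or clear a tunnel wall or reach the semisolid from the wrong side. As with powerups we take the generalized game to let us assume Mario enters with no badge equipped (equivalently, we assert hardness of the no-badge variant); alternatively, one checks directly that the gadget's walls are tall and thick enough that no single badge ability opens a shortcut.

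I expect the main obstacle to be precisely this audit of Wonder's enlarged move set against the Figure~\ref{smb3}-style gadget rather than any fresh gadget design. Because Wonder hands Mario more traversal options than the \GAME{New Super Mario Bros.}\ games, I would re-examine every wall --- the coin-block barrier over the enemy, the muncher partition, and the divider between the traverse and close tunnels --- to be sure none is circumventable by a long run-up, a midair spin, or a ground pound, thickening blocks or inserting one-ways as in Theorem~\ref{thm:smm} wherever a shortcut appears. Once each tunnel and wall passes this check, connecting polynomially many copies of the door by block tunnels (as in our other reductions) turns planar reachability with open-close doors into the problem of completing a \GAME{Super Mario Bros.\ Wonder} level, establishing PSPACE-hardness.
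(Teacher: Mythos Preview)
Your proposal is correct and follows the same high-level approach as the paper: both build the Section~\ref{sec:open-close} open-close door in \GAME{Super Mario Bros.\ Wonder} by adapting the \GAME{Super Mario Bros.\ 3} gadget of Theorem~\ref{thm:smb3}. The one substantive difference is in how the ``Mario vaults over the enemy onto the semisolid'' threat is handled. You retain the invisible coin blocks from Figure~\ref{smb3} and rely on the same monotonicity argument, whereas the paper observes that \GAME{Super Mario Bros.\ Wonder} permits non-grid-aligned object placement and uses that freedom to reshape the tunnel geometry so that the coin blocks are unnecessary altogether, yielding a slightly simpler gadget. Conversely, your discussion of Wonder Flowers, the new power-ups, and especially Badges is more careful than the paper's proof, which is terse and does not explicitly address these; your suggestion to assert hardness of the no-badge variant (or to audit each badge against the gadget walls) is a reasonable way to close that gap.
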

\begin{proof}
Our door gadget for \GAME{Super Mario Bros.\ Wonder} is similar to the door gadget for \GAME{Super Mario Bros.\ 3}, but we take advantage of the fact that \GAME{Super Mario Bros.\ Wonder} can have non-grid-aligned objects to simplify the gadget by removing the need for invisible coin blocks. Figure~\ref{smbw} shows a construction of the gadget.
\end{proof}

\begin{figure}[ht]
    \centering
    \includegraphics[width=\linewidth]{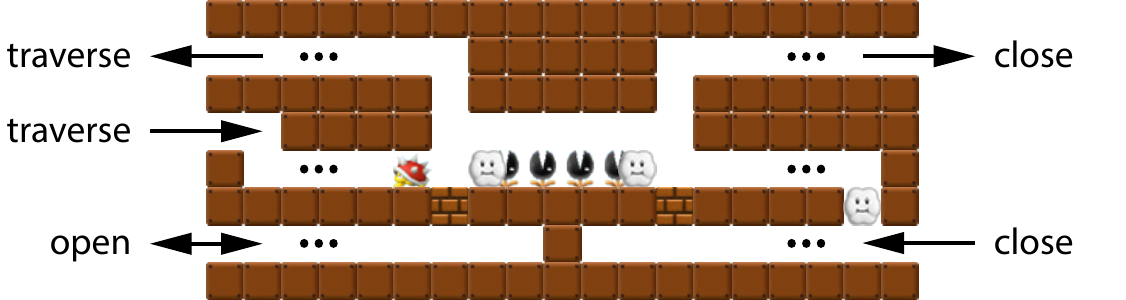}
    \caption{An open-close door in \GAME{Super Mario Bros.\ Wonder}}
    \label{smbw}
\end{figure}

\FloatBarrier

%\section{Other Doors}
%
%We now prove that three more games are PSPACE-hard using very different reductions than the rest of the games in this paper.

\section{Self-Closing Doors}
\label{sec:scd}

\subsection{Super Mario Land 2}

\begin{theorem} \label{thm:sml2}
    Super Mario Land 2: 6 Golden Coins is PSPACE-hard
\end{theorem}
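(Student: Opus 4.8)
The plan is to build a \emph{self-closing door} gadget entirely out of elements of \GAME{Super Mario Land 2: 6 Golden Coins}; once we have one, the framework of Section~\ref{sec:open-close} (reduction from planar reachability with door gadgets \cite{doors}) gives PSPACE-hardness immediately, since \GAME{Super Mario Land 2} inherits the generalization assumptions of Section~\ref{general}. Recall that a self-closing door has a \emph{traversal} path that can be followed only when the door is open and that forces the door into the closed state upon traversal, together with an optional \emph{open} path whose entrance and exit coincide and whose use sets the door to open. We aim for this weaker primitive rather than a full open-close door because \GAME{Super Mario Land 2} removes several of the mechanics the doors of Section~\ref{sec:open-close} rely on; a self-closing door is what one can still assemble.

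The gadget follows the same idea as the open-close doors of Section~\ref{sec:open-close}: the door's state is encoded by the position of a single enemy that rests on a semisolid platform, with munchers/spikes placed above so that Mario can neither stomp the enemy nor jump over it (exactly as in the \GAME{Super Mario Bros.\ 3} door of Theorem~\ref{thm:smb3} and the \GAME{Super Mario World} door of Theorem~\ref{thm:smw}). In the \emph{closed} state the enemy sits at the one position where it blocks the mouth of the traversal path. The open path lets Mario hit a block from below, with careful timing, to bounce the enemy up onto the semisolid and over to a safe perch clear of the traversal path, after which Mario exits the way he came (entrances and exits coincide, so opening is genuinely optional). The traversal path, available only while the enemy is on the perch, is routed so that Mario is \emph{forced} on his way through to knock the enemy back down into its blocking rest position behind him --- this is the self-closing step. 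As in Section~\ref{sec:open-close}, every tunnel is drawn far longer than shown, so Mario cannot out-race the enemy's stun timer and carry it off, and an intangible-until-hit block (or a one-way) prevents him from climbing onto the semisolid or jumping past the enemy. If \GAME{Super Mario Land 2}'s enemy set does not support a walking, bounce-stunnable enemy as cleanly as spinies do in \GAME{Super Mario Bros.\ 3}, the same structure --- a state-holding object that blocks the traversal path when closed, a traversal path whose geometry forces that object back into place, and an optional open path that clears it --- would instead be realized with another \GAME{Super Mario Land 2}-specific mechanic, with only the local implementation changing.

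The main obstacle is the verification, which I would carry out and confirm in a community level editor, as described in Section~\ref{general}, in three parts. First, I must check that Mario genuinely cannot complete the traversal path without restoring the blocking object: this is what pins down the gadget geometry, and the delicate point is that \GAME{Super Mario Land 2}'s jump arcs and enemy timing must not let him slip through, re-perch the object, or otherwise escape closing the door. Second, I must check robustness against misuse: any state Mario can reach by bumping a block at the wrong moment, killing the enemy, and so on leaves the door stuck in some fixed state, and one argues --- exactly as in the \GAME{Super Mario Bros.\ 3} proof --- that a stuck door only makes the reachability instance more restrictive, so such actions are never useful to Mario. Third, powerups are already a non-issue by the global convention of Section~\ref{general} that Mario begins powerless. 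With the gadget verified, many copies are wired together by tunnels of solid blocks in the usual planar fashion so that the only route to the flagpole realizes a solution to the given planar door-reachability instance, completing the reduction.
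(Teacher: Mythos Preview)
Your high-level plan matches the paper exactly: both build a \emph{self-closing door} in \GAME{Super Mario Land 2} whose state is the position of a single bounceable enemy, with an open path that bumps the enemy aside and a traversal path that forces the enemy back to its blocking position. So the framework choice is right.

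The gap is that what you have written is a \emph{plan} for a proof, not the proof itself. You explicitly defer the construction (``would instead be realized with another \GAME{Super Mario Land 2}-specific mechanic'') and the verification (``I would carry out and confirm in a community level editor''). But the entire content of the paper's argument is precisely that concrete construction. In particular, the paper identifies the enemy as a koopa and observes the key game-specific complication you do not mention: in \GAME{Super Mario Land 2} koopas behave like red koopas elsewhere and \emph{do not walk off ledges}. This breaks the \GAME{Super Mario Bros.\ 3}/\GAME{Super Mario World} template you analogize to, where the bounced enemy walks across the semisolid on its own. The paper's gadget is shaped around this: in the open state the koopa is not parked harmlessly out of the way but instead sits blocking the \emph{right half} of the traversal corridor; to finish traversing, Mario must drop below a semisolid, bump the koopa back to the left (the closed position), and then backtrack past where it used to be. That geometry, together with the argument that a badly timed bump only traps Mario (hence an optimal player avoids it), is the substance of the proof. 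Your proposal gestures at ``forced to knock the enemy back'' but does not supply a mechanism compatible with \GAME{Super Mario Land 2}'s actual enemy behavior, and your fallback clause (``only the local implementation changing'') is exactly the part that needs to be written down.
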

\begin{proof}
To show \GAME{Super Mario Land 2: 6 Golden Coins} is PSPACE-hard, we reduce from planar reachability with self-closing doors \cite{doors}, using the self-closing door gadget shown in Figure~\ref{sml2}. \GAME{Super Mario Land 2} does not have a spiny enemy, so we instead use a koopa \ICON{sml2/koopa.png}. In this game, koopas do not walk off of ledges (like red koopas in other games), which complicates matters. When the door is closed (the koopas is in the top left, pictured solid in Figure~\ref{sml2}), it blocks the traversal path because Mario is too tall to jump past it in the 1-block corridor. To open the door, Mario enters the open tunnel, and with a well-timed hit, it is possible to bounce the koopa into the right of the gadget (as pictured partially transparent in Figure~\ref{sml2}). This opens up the beginning of the traversal tunnel, but blocks the right half. To continue with traversal, Mario must head down the vertical tunnel, below the semisolid \ICON{sml2/semisolid.png}, and reach the other light gray block to bounce the koopa back to the left. Mario can then backtrack and finish traversal. However, when Mario bounces the koopa to the left, it is possible for him to instead time the hit such that the koopa falls down the same tunnel Mario entered. If this happens, the koopa will block Mario's path out, which prevents Mario from finishing the traversal. Therefore, an optimal player must choose not to do this, and we can safely assume it does not happen.
\begin{figure}[ht]
    \centering
    \includegraphics[width=\linewidth]{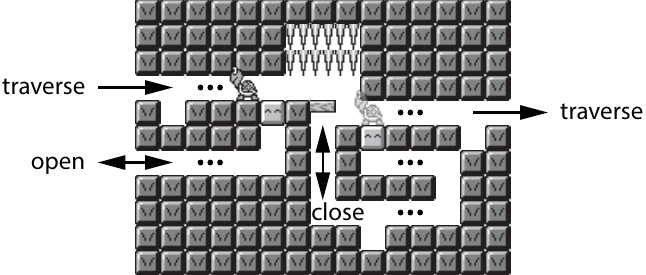}
    \caption{A self-closing door in \GAME{Super Mario Land 2}}
    \label{sml2}
\end{figure}
\end{proof}

\subsection{Super Mario World 2: Yoshi's Island}

\begin{theorem} \label{thm:smw2}
Super Mario World 2: Yoshi's Island is PSPACE-hard
\end{theorem}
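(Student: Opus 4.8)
The plan is to follow the same template used for the other games in this section, namely to build a door gadget (here most naturally a self-closing door, since that matched \GAME{Super Mario Land 2}) out of an enemy that bounces between two sides of a corridor when hit from below, and then invoke the fact that planar reachability with self-closing doors is PSPACE-complete~\cite{doors}. First I would identify which \GAME{Super Mario World 2: Yoshi's Island} mechanic plays the role of the spiny/koopa: the game has no standard Mario powerups but it does have eggs that Yoshi throws, Shy Guys, and the ground-pound/flutter-jump moves, so I expect the gadget will use a thrown egg or a ground pound to knock an enemy (a Shy Guy or similar) from one side of a one-tile corridor to the other, with the enemy blocking the traversal path while it is on the ``closed'' side. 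The vertical or side barrier that passes the enemy but not Yoshi would be played by a semisolid platform or by spikes/an instant-damage hazard, exactly as munchers are used in the \GAME{Super Mario Bros.\ 3} and \GAME{Super Mario World} gadgets.

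Second, I would lay out the three traversal paths required of a self-closing door (Figure~\ref{doors}): a traversal path that is passable only when the enemy is on the open side and that, in the course of being traversed, forces the enemy back to the closed side (this is the self-closing behavior); and an optional open path whose entrance and exit coincide, along which Yoshi can, with a well-timed hit, knock the enemy to the open side. I would then argue correctness in two directions: (i) if the door is open, Yoshi can complete the traversal, and doing so necessarily re-closes it; (ii) if the door is closed, the one-tile corridor is genuinely impassable — here I would need to check Yoshi's exact jump height, flutter-jump reach, and hitbox against the blocking enemy and the ceiling, the same kind of geometric check flagged as necessary throughout Section~\ref{sec:open-close}.

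The main obstacle I anticipate is robustness against the extra mobility \GAME{Yoshi's Island} gives the player: the flutter jump and especially the ability to throw eggs at a distance make it much easier for the player to ``cheat'' — e.g.\ to hit the enemy from an unintended angle, to egg it from across the gadget and desynchronize its position, or to squeeze past using a flutter jump where a normal Mario could not. So a large part of the proof will be choosing corridor widths, ceiling heights, and egg-trajectory blockers (solid blocks or additional hazards) so that none of these extra moves helps, and, as in the \GAME{Super Mario Bros.\ 3} and \GAME{Super Mario Land 2} arguments, showing that any ``broken'' state the player can force (enemy crushed, enemy dropped down the wrong tunnel, gadget stuck) only makes reachability more restrictive and hence does not compromise the reduction. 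I would also, following the pattern of the section, assume all tunnels are drawn much longer than depicted so that a stunned enemy has woken and resumed walking before Yoshi can exploit the stun, and I would note that the gadget has been playtested in a community level editor as corroboration. Concretely, I would present the construction in a figure analogous to Figures~\ref{smb3} and~\ref{sml2}, describe the open and traverse paths with reference to that figure, and conclude PSPACE-hardness by the reduction from planar reachability with self-closing doors.

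\begin{figure}[ht]
    \centering
    \includegraphics[width=\linewidth]{figures/smw2.pdf}
    \caption{A self-closing door in \GAME{Super Mario World 2: Yoshi's Island}}
    \label{smw2}
\end{figure}
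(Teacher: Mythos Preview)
Your proposal correctly identifies the reduction source (planar reachability with self-closing doors) and the overall template, but it is a research plan rather than a proof: you never actually build a gadget, only conjecture that some hit-from-below enemy bounce ``analogous to Figures~\ref{smb3} and~\ref{sml2}'' will work. In \GAME{Super Mario World 2: Yoshi's Island} that template is particularly fragile, and you flag this yourself: Yoshi can eat almost any enemy with his tongue and convert it into an egg, can throw eggs at arbitrary angles from a distance, can flutter-jump, and can ground-pound. A Shy Guy in a one-tile corridor is not a barrier to Yoshi the way a spiny is to Mario---Yoshi simply eats it. So the central step, ``identify which \GAME{Yoshi's Island} mechanic plays the role of the spiny/koopa,'' is exactly the hard part, and you have not carried it out.

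The paper's construction avoids the bounce-an-enemy template entirely. Its self-closing door is built from a \emph{chomp rock} (a large ball Yoshi can ride and push) on a slope over spikes: Yoshi must ride the rock across to reach the traverse exit, and doing so rolls the rock down the slope where Yoshi cannot push it back up, closing the door. The open path grants a timed \emph{helicopter morph} powerup, which lets Yoshi hover over the spikes long enough to push the rock back uphill; when the morph timer expires Yoshi is teleported back to where he picked it up, so with long enough inter-gadget tunnels he cannot reach or open any other door. This mechanism is robust precisely because it does not rely on a blocking enemy that Yoshi could eat or egg, and the self-closing behavior comes from gravity on the slope rather than from enemy AI. Your outline does not anticipate any of these ingredients, so as written it does not establish the theorem.
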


\begin{proof}
To show PSPACE-hardness for \GAME{Super Mario World 2: Yoshi's Island}, we reduce from planar reachability with self-closing doors \cite{doors}, using the self-closing door gadget shown in Figure~\ref{smw2}. To traverse, the player must ride the chomp rock \ICON{smw2/chomp-rock.png}, a spherical enemy which rolls when stood on, to the right side of the gadget. Attempts to cross without using the chomp rock will fail as the player does not have enough height to reach the traversal exit without the chomp rock and cannot stand on spikes \ICON{smw2/spike} without dying. The chomp rock rolls down a slope and loses momentum upon reaching the bottom due to partially hitting the spikes. The player can then use it to traverse, but is unable to push the chomp rock back up the slope at the same time. To open the gadget, the player gains access to a reusable helicopter powerup \ICON{smw2/helicopter.png}. This powerup briefly turns Yoshi into a helicopter, allowing the player to safely push the the chomp rock up the slope and reset the door while hovering safely above the spikes. When the timer on the helicopter powerup runs out, Yoshi is instantly transported back to the open tunnel. Because of the powerup timer, by making all tunnels between gadgets sufficiently long, we can ensure that the player cannot reach and open any gadgets other than the intended one.
\begin{figure}[ht]
    \centering
    \includegraphics[width=\linewidth]{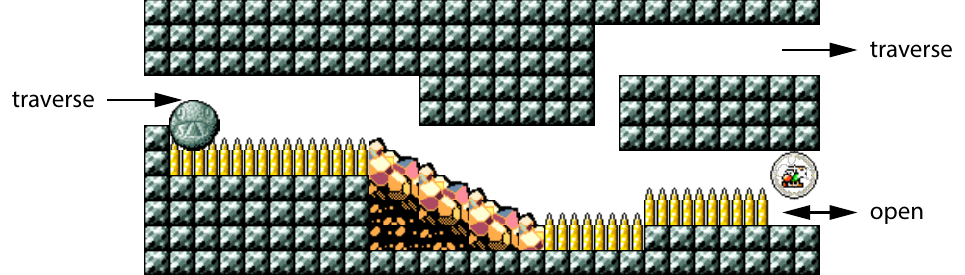}
    \caption{A self-closing door in Super Mario World 2}
    \label{smw2}
\end{figure}
\end{proof}

\section{Event-Based Symmetric Self-Closing Doors}
\label{sec:event sscd}

Corollary~\ref{cor:nsmb} already showed that all four
\GAME{New Super Mario Bros.}\ games are PSPACE-hard.
But these games also implement a system called \defn{events},
which allow the player to interact with blocks via switches and
event controllers.
In this section, we develop event-based symmetric self-closing door gadget,
for two main reasons:
\begin{enumerate}
    \item These doors are incredibly simple and small compared to the previous ones.
    \item Unlike enemies, events persist regardless of the location of the screen. This means that we do not need to generalize screen size, which shows hardness for something that is much closer to the original game. This is also of some importance for \GAME{New Super Mario Bros.\ U}, as discussed in Section~\ref{nsmbugamepad}.
\end{enumerate}

That being said, there are some drawbacks to using events. For one, they work behind the scenes and their behavior is not transparent to the player. It could be seen as not in the spirit of the game to build doors out of objects that are not true visible game elements. Furthermore, while these games do not impose arbitrary limits on the number of enemies (outside of memory constraints), they do impose a limit of 255 events based on the encoding of object data in binary. To allow for arbitrary events, we would have to generalize to a system which allows for more events.
For these reasons, this door does not completely supplant
the door of Corollary~\ref{cor:nsmb}.

\subsection{Event Mechanics}
The relevant event mechanics used in this paper are as follows:
\begin{itemize}
    \item There are numbered events, each of which can be either on or off
    \item There are numbered locations, which are sections of the level bounded by some rectangle
    \item Location controllers toggle an event based on the status of enemies or players in a given location
    \item There are blocks which appear or disappear according to the status of an event
\end{itemize}

\subsection{\GAME{New Super Mario Bros.\ U} with Gamepad} \label{nsmbugamepad}
In \GAME{New Super Mario Bros.\ U}, the Wii U gamepad serves a special purpose. If the player is playing on a non-gamepad controller, they (or a friend) can use the gamepad in ``boost mode" to help Mario. The player with the gamepad can create platforms on-screen to help the player cross dangerous areas, and if Mario steps on 10 of these, the gamepad player gains access to a boost star which allows them to kill enemies by tapping on them. It is very reasonable to disallow boost mode in our generalized \GAME{New Super Mario Bros.\ U} as it somewhat breaks the restriction of only one player, and this must be done for the door in Figure~\ref{nsmbw} to work properly, but with events, we can allow for the gamepad as the gamepad player has no control over any event behavior.

\subsection{Self-Closing Door}

\begin{theorem} \label{thm:nsmb}
    \GAME{New Super Mario Bros.}, \GAME{New Super Mario Bros.\ Wii},
    \GAME{New Super Mario Bros.\ 2}, and \GAME{New Super Mario Bros.\ U}
    are all PSPACE-hard, even when restricted to just blocks and events.
\end{theorem}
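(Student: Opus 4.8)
The plan is to reduce from planar reachability with symmetric self-closing doors, which is PSPACE-complete \cite{doors}, by implementing that door gadget out of nothing but blocks, one numbered event, and one location controller --- in particular using no enemies, which is what gives the stronger ``blocks and events only'' statement. To each door I would assign a private event $E$, with the convention that $E$ on encodes the open state and $E$ off the closed state, and initialize $E$ to match the door's starting state.

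The gadget itself consists of two short horizontal block corridors, the \emph{top} path and the \emph{bottom} path. Along the top path I place a block that is present exactly when $E$ is off; this makes the top path passable only when the door is open. Somewhere inside the top corridor I place a location together with a controller that, the moment the player enters it, forces $E$ to off (equivalently, toggles $E$, since the guard block guarantees the player can only be here while $E$ is on). Thus traversing the top path is possible only when the door is open and leaves it closed, exactly as required. The bottom path is built as the mirror image: a guard block present exactly when $E$ is on, and a controller inside that forces $E$ to on. Because each guard block pins down the state in which the player can be inside its corridor, every transition comes out in the correct direction, and the gadget faithfully simulates a symmetric self-closing door.

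For the reduction I would take an instance of planar reachability with symmetric self-closing doors, replace each door by one copy of this gadget with a fresh event number, and wire the door locations together with corridors of blocks laid out along the planar embedding; no crossover gadget is needed, since planar reachability already suffices \cite{doors}. Mario's spawn point and the flagpole go at the two designated locations. The number of events used is linear in the instance, which is precisely why the generalized game must lift the games' cap of $255$ events as discussed above. Since the gadget uses only blocks, events, and location controllers --- all present and behaving identically in all four \GAME{New Super Mario Bros.}\ games --- this proves PSPACE-hardness for each of them, even in the blocks-and-events-only setting.

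The delicate point, and the one I expect to require the most care, is ruling out misbehavior of the reappearing guard block rather than confirming the intended behavior: I would need to fix the precise trigger semantics of the location controller (edge-triggered when the player enters the location, not re-triggered while he loiters there) and the exact placement of the controller relative to the guard block, so that when the block reappears mid-traversal it materializes behind the player --- never crushing him and never sealing him into a pocket from which re-exiting the entrance would register as a state change without a completed traversal. Making each corridor effectively one-directional, e.g.\ with a tall drop of blocks the player cannot climb back out of, is a clean way to guarantee this; I would also check that a player who enters a corridor and retreats before reaching the controller gains nothing, and that events of one gadget can never be triggered by the player or a block belonging to another gadget, which follows from the disjointness of the event numbers and of the location rectangles.
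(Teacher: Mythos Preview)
Your proposal is correct and essentially identical to the paper's own proof: the paper also builds a symmetric self-closing door from one event per gadget, two event-controlled guard blocks (one per corridor, keyed to opposite event states), and two location controllers that flip the event when the player passes through, then reduces from planar reachability with symmetric self-closing doors. The only differences are cosmetic (the paper's labeling has the top path passable when the event is \emph{off} and turning it \emph{on}, the reverse of your convention) and that you spell out the retreat/crush edge cases that the paper leaves to its figure.
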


\begin{proof}
The gadget pictured in Figure~\ref{nsmbu} is an event-based symmetric self-closing door. It uses one event. The blocks labeled 1 are on if and only if the event is on, and the blocks labeled 2 are on if and only if the event is off. When the player enters location 3, the event turns on, and when the player enters location 4, the event turns off. In this way, traversal through the top path is only possible when the event is off, and doing so turns the event on, and traversal through the bottom path is only possible when the event is on, and doing so turns the event off. The door in Figure~\ref{nsmbu} is pictured in \GAME{New Super Mario Bros.\ U}, but the same gadget works in all four \GAME{New Super Mario Bros.}\ games.
\end{proof}

\begin{figure}[ht]
    \centering
    \includegraphics[width=0.75\linewidth]{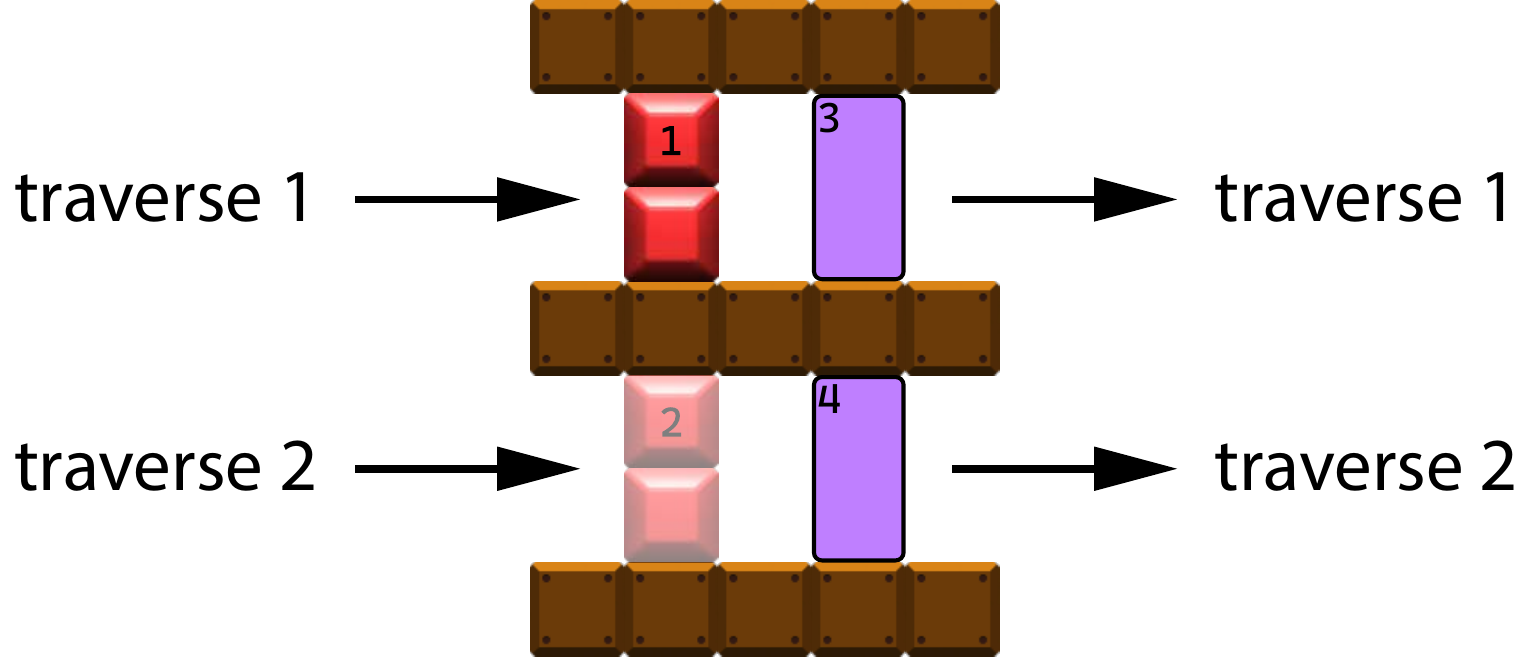}
    \caption{A symmetric self-closing door in \GAME{New Super Mario Bros.\ U}}
    \label{nsmbu}
\end{figure}
\FloatBarrier

\section{Super Mario Bros.\ 2 Open-Close Door}
\label{sec:smb2}

In this section, we prove PSPACE-hardness of \GAME{Super Mario Bros 2}
(as named in North America).  This reduction consists of
our most complicated door gadget by far.

%\subsection{Super Mario Bros.\ 2 Mechanics}
\subsection{Mechanics}
\GAME{Super Mario Bros 2} has significantly different mechanics from the rest of the games in this paper, and our reduction for it is relies on many of these mechanics, so we take some extra time to describe the behavior of various objects. Refer to Figure~\ref{smb2u} to see images of relevant objects.
\begin{itemize}
    \item The blue blocks with an X \ICON{smb2u/semisolid.png} pictured in our gadgets are semi-solids or diodes. They can be passed through from below, but not from above.
    \item The spikes \ICON{smb2u/spike.png} will kill the player if they land on them from above.
    \item Mario cannot naturally pass through a 1 block tall tunnel. However, by crouching, holding in the direction of desired movement, and repeatedly jumping, Mario can make very slow progress through the tunnel. This mechanic is used to pass through the tunnel at the bottom right of the gadget.
    \item The mushrooms \ICON{smb2u/mushroom.png} are objects which the player can pick up and place down. When placed, they fall and then snap to the grid in the level. When the player is holding any object, they cannot hold any other object or enemy. They also cannot pass through 1 block tall tunnels.
    \item The Birdo enemy \ICON{smb2u/birdo.png}, pictured in bottom right of the gadget will shoot eggs \ICON{smb2u/shot.png} when on screen. Eggs move across the screen horizontally at a constant speed until they hit a solid object. The player can jump off of them and can also pick them up and throw them to kill enemies.
    \item There are two types of shy guy enemies. They walk back and forth and can be picked up and thrown. When thrown, they return to walking. The red shy guy \ICON{smb2u/shyguy-red.png} will walk off of ledges (analogous to green koopas in other Mario games), and the pink shy guy \ICON{smb2u/shyguy-pink.png} will turn when it reaches a ledge (analogous to red koopas in other Mario games). Mario can also stand on shy guys and ride them across spikes.
    \item By crouching for a brief period of time to charge his jump, Mario can perform a super jump, which is an especially tall jump.
    \item The game features 4 playable characters: Mario, Luigi, Toad, and Peach. The relevant differences for this gadget are that Luigi jumps higher than the others and that Peach can float for a brief period of time. Our gadget is robust to use of all four characters.
    \item The game remembers the position of mushrooms that are moved when they become off screen, so we do not need to generalize screen size. In fact, we will take advantage of screen size constraints.
\end{itemize}

%\subsection{Super Mario Bros.\ 2 Glitches}
\subsection{Glitches}

Super Mario Bros.\ 2 has many glitches and unintended behavior. Several of these are documented in \cite{glitches}. As far as we are aware, there are two which are relevant for our gadget.
\begin{itemize}
    \item If the player jumps and makes contact with the corner of an enemy, they can perform a second jump midair, resulting in a much higher jump than should be allowed. We avoid abuse of this by not giving the player opportunities to jump off of the corners of enemies where important.
    \item If the player is crouching on top of an enemy which walks into a one block tall tunnel, the player will not collide with the wall, despite doing so visually. This does not happen if the player is holding an item. If the roof of the tunnel is only one block thick, the player can then jump through the ceiling. We make tunnel ceilings two blocks thick and make use of shy guys to prevent abuse of this.
\end{itemize}
\begin{figure}[ht]
    \centering
    \includegraphics[width=1\linewidth]{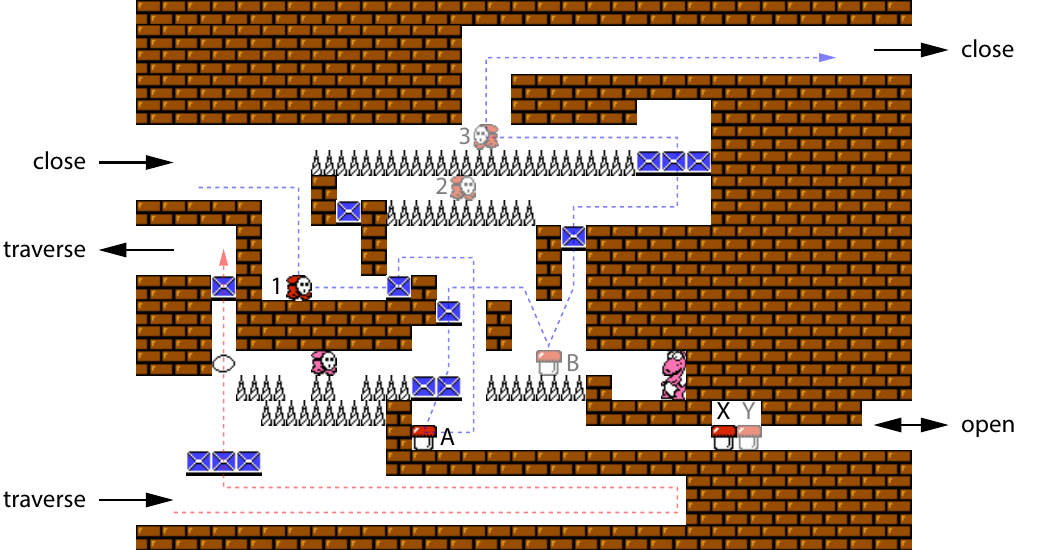}
    \caption{An open-close door in Super Mario Bros.\ 2}
    \label{smb2u}
\end{figure}

%\subsection{Super Mario Bros.\ 2 Open-Close Door}
\subsection{Open-Close Door}

\begin{theorem} \label{thm:smb2}
Super Mario Bros.\ 2 is PSPACE-hard.
\end{theorem}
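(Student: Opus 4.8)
The plan is to reduce from planar reachability with (optional-open) open-close door gadgets, shown PSPACE-complete in~\cite{doors}, exactly as in our other PSPACE-hardness results: given a single working open-close door in \GAME{Super Mario Bros.\ 2}, we instantiate one copy per door of the instance, wire their paths together with tunnels of solid blocks following a planar embedding, and place Mario's start and the flagpole at the two designated locations, so that Mario can reach the flagpole if and only if the door system is solvable. Thus the entire content of the theorem is a proof that the gadget in Figure~\ref{smb2u} is a correct open-close door.

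To establish that, I would first fix the state convention: the gadget is \emph{open} exactly when the mushroom \ICON{smb2u/mushroom.png} occupies one designated grid cell and \emph{closed} when it occupies the other. Because \GAME{Super Mario Bros.\ 2} records a displaced mushroom's position even while it is off screen, and because the two cells are placed far enough apart (relative to the screen size we are deliberately \emph{not} generalizing here), this state genuinely persists and the player must commit to moving the mushroom before seeing the result. Next I would verify the three declared traversals one at a time. For the traverse path: when the mushroom sits in the open cell it plugs the gap (or supplies the platform) that lets Mario complete the path, while when it sits in the closed cell that location is instead a lethal spike pit \ICON{smb2u/spike.png} that Mario cannot clear --- not with a charged super jump, not with Luigi's taller jump, not with Peach's float --- and cannot cross by riding a shy guy or an egg \ICON{smb2u/shot.png}, because the geometry and the semisolids \ICON{smb2u/semisolid.png} route enemies and eggs away. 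For the close path: the only way through forces Mario into the bottom-right one-block-tall corridor, which he can pass only by the crouch-and-repeatedly-jump maneuver and only while \emph{not} carrying anything, and completing it necessarily leaves the mushroom in the closed cell (he cannot smuggle it through, since holding an item blocks the corridor); the Birdo \ICON{smb2u/birdo.png} together with the red \ICON{smb2u/shyguy-red.png} and pink \ICON{smb2u/shyguy-pink.png} shy guys supply exactly the platforming needed to reach and relocate the mushroom along this path. For the open path: it enters and exits at the same location, lets Mario move the mushroom to the open cell if he chooses, and offers no other progress --- matching the optional-open semantics.

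The bulk of the work, and the step I expect to be the main obstacle, is ruling out every unintended interaction in a gadget this large: Mario dropping the mushroom into a wrong or non-grid position, losing or stranding an egg, carrying an item into a one-block corridor, bouncing a shy guy off the wrong ledge, or exploiting the two documented glitches (the corner-of-enemy double jump and the crouch-on-enemy clip through a thin ceiling). For the glitches I would invoke the design choices already described --- two-block-thick tunnel ceilings with shy guys occupying the clip positions, and no jumpable enemy corners near any critical jump. For everything else I would argue monotonicity: any deviation either strands Mario (forcing a backtrack or a death) or can only shrink the set of locations he can subsequently reach, and never manufactures a traversal that the abstract open-close door forbids; hence an optimal player never benefits from misuse and we may assume it does not occur, exactly as in the \GAME{Super Mario Bros.\ 3} argument. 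Finally I would note that all four playable characters are handled uniformly by the above case analysis, and that the gadget has been built and playtested in the original game's physics (see the playtesting discussion and the linked videos) as an independent sanity check. Combining the gadget's correctness with the reduction of~\cite{doors} yields PSPACE-hardness of \GAME{Super Mario Bros.\ 2}.
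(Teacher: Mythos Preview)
Your high-level plan is exactly the paper's: reduce from planar reachability with (optional-open) open-close doors and argue that Figure~\ref{smb2u} implements such a door. But the concrete mechanisms you sketch for the three traversals do not match the gadget, and with those misreadings the verification would not go through.

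For the \emph{traverse} path, the mushroom does not ``plug a gap'' or ``supply a platform''. The actual mechanism is that Mario must ride a Birdo egg partway along the bottom tunnel and jump off it to reach the traverse exit; the mushroom in position~\textsf{B} (closed) sits in the egg's flight path and stops the egg before it reaches Mario, while in position~\textsf{A} (open) the egg passes. The height argument (even Luigi's charged super jump fails without the egg) is what enforces necessity. For the \emph{close} path, the one-block corridor at the bottom right is not involved at all --- that corridor belongs to the open path. Closing instead proceeds along the top of the gadget: Mario throws the red shy guy from position~\textsf{1} up to~\textsf{2}, moves the mushroom from~\textsf{A} to~\textsf{B}, and is then \emph{forced} to have placed it at~\textsf{B} because the only way to continue upward is a super jump off that mushroom (a jump off a moving egg cannot be charged in time); he then rides the shy guy across the upper spike row at~\textsf{3}, after which the shy guy cycles back to~\textsf{1}. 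So the ``can't carry an item through a one-block tunnel'' constraint is not what forces the close --- the super-jump height requirement is. For the \emph{open} path, you omit the essential sub-gadget: a second mushroom at positions~\textsf{X}/\textsf{Y} forms a 1-toggle that the player must flip on the way in and restore on the way out, and this is what prevents leakage between the open and close paths (in particular, it stops Mario from exiting via the open tunnel after closing). Finally, the pink shy guy stationed in the egg corridor is there specifically to block Mario from riding an egg all the way from the open/close side into the traverse exit; your list of unintended interactions should include this egg-riding shortcut, and the fix is the pink shy guy plus corridor length, not monotonicity.
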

\begin{proof}
We reduce from planar reachability with open-close doors \cite{doors}. A door is pictured in Figure~\ref{smb2u}. The state of the gadget is determined by a mushroom object, which the player can pick up and place in different locations. Specifically, the mushroom in position \textsf{A}, representing the open state, can instead be in position \textsf{B}, representing the closed state. When the door is open, the player can run through the bottom tunnel until Birdo is onscreen. Birdo will then shoot an egg to the left which the player can jump on to reach the exit of the traverse tunnel. This path is pictured in red. When the door is closed, the mushroom is in position \textsf{B}, and the egg's path is blocked, preventing traversal. Even Luigi with a super jump is unable to reach the platform without the egg.

Closing the door begins with the player throwing the shy guy in position \textsf{1} onto the spikes above, as shown in position \textsf{2}. The shy guy walks to the right for use later. The player cannot cross this same gap because of the spikes. The glitch which allows for the player to ride an enemy across does not apply because the ceiling prevents Mario from getting on the enemy in the first place. Then, the player proceeds to the right and moves the mushroom from position \textsf{A} to \textsf{B}. Once the mushroom is in position \textsf{B}, they can use it to super jump up to the right. If they do not place the mushroom, they will be unable to exit, as a regular jump off of an egg does not give enough height, even for Luigi, to exit, and it takes too long to charge a super jump for the player to perform one off of the egg before it moves too far to the left. Hence, for the player to continue through the close tunnel, they must place the mushroom in location \textsf{B}, blocking the path of the eggs, and closing the traverse tunnel. Finally, the player can move the shy guy that was thrown to position \textsf{2} to the row of spikes above, and jump off of it in position \textsf{3} to cross an otherwise impassable spike tunnel. The shy guy continues walking to the left and falls back down to position \textsf{1} in preparation for the next use. An outline of this path is pictured in blue. Note that the player can choose not to jump off the shy guy and instead ride it back to the entrance of the close gadget, but this does not accomplish anything productive as they are simply forced to close the gadget again, so we can assume this does not happen.

Opening the door begins on the bottom right of the gadget with a 1-toggle sub-gadget. The mushroom can be placed in either position \textsf{X} or \textsf{Y}, but can only be picked up from the top, not the side. The effect of this is that the 1-toggle can only be traversed from the right if the mushroom is in position \textsf{X} and from the left if in position \textsf{Y}. This prevents the player from exiting through the open tunnel when closing the door. For the player to traverse from right to left in the open path, the mushroom is forced into position \textsf{Y}. Then, the player can (optionally) move the other mushroom from position \textsf{B} to position \textsf{A}, opening the door. They cannot exit through the close tunnel as they do not have the shy guy required to cross the spikes, and are forced to return through the 1-toggle, returning it to position \textsf{X} as they leave.

One might worry that the player can ride an egg through the gadget to the traverse path when opening or closing the door. This would be an issue, but there is a pink shy guy in the egg's path. The egg will pass through with no issue, but if a player is riding an egg, the taller hitbox of the shy guy will cause them to become stuck in the middle of a row of spikes, and they are unable to jump across the shy guy fast enough to continue riding the egg. The shy guy is placed far enough to the left such that Birdo is off screen and will not shoot any more eggs while the player is stranded. As pictured, the player could grab an egg and throw it at the shy guy before performing this shortcut, thus clearing the way, but making the tunnel sufficiently long and placing the shy guy far enough from the ends will prevent this. We simply show the smaller gadget here for simplicity.
\end{proof}

\FloatBarrier

\section{NP-Hardness}
\label{sec:np}

We initially set out to prove all 2D Mario Games PSPACE-complete. Sadly, we have not yet succeeded for two of these games we considered: \GAME{Super Mario Land} and \GAME{Super Mario Run}. Nonetheless, we can at least prove NP-hardness for both of these games. We will be using the framework developed in \cite{nphard}.

\begin{theorem} \label{thm:sml}
    Super Mario Land is NP-hard.
\end{theorem}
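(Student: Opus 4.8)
The plan is to follow the SAT-based framework of \cite{nphard}, reducing from 3-SAT. As in that framework, the level is laid out as a sequence of \emph{variable gadgets}, one per variable, followed by a \emph{check} region that funnels Mario through one \emph{clause gadget} per clause; the whole construction is made effectively \emph{one-way}, so that once Mario commits to a truth value for a variable he can never return to revisit it. Each variable gadget presents Mario with a fork between a ``true'' corridor and a ``false'' corridor; whichever he takes winds past exactly the clause gadgets in which the corresponding literal appears, giving Mario the option to ``unlock'' each such clause before being dropped (by a shaft too tall to climb back up) into the next variable gadget. After the last variable, Mario must traverse all $m$ clause gadgets in series, and a clause gadget is passable only if at least one of its literals unlocked it. Hence Mario can reach the flagpole iff the formula is satisfiable, the level has size polynomial in the formula, and combined with the powerup-stripping technique of Section~\ref{general} this yields NP-hardness of \GAME{Super Mario Land}.

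Concretely, I would realize these gadgets using only \GAME{Super Mario Land}'s basic blocks and enemies, steering clear of the autoscrolling vehicle levels and the superball powerup. The one-way drops are simply pits deeper than Mario's maximum jump, so corridors cannot be retraced upward and earlier variable gadgets are unreachable. For the clause gadget I would use a small chamber whose exit toward the check region is blocked by an obstacle --- for instance a short stack of blocks Mario can only break by hitting them from below, or an enemy he can only dispatch when approaching from the literal side --- so that passing through the literal corridor lets Mario permanently clear the obstacle, whereas a Mario arriving later from the check side cannot. The side passage from a literal corridor into its clause chamber is allowed to be locally two-way (enter, unlock, return to the corridor), which is harmless since it encodes no choice that must be remembered. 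The routine but unavoidable part of this step is checking, for every enemy and block used, that \GAME{Super Mario Land}-specific movement (bouncing off enemies, etc.) cannot shortcut a gadget or let Mario escape a drop.

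The step I expect to be the main obstacle is the \emph{crossover gadget}: the wiring between variable corridors and clause chambers is not planar in general, so we need a constant-size region carrying two of Mario's corridors that cross without letting him switch between them. Since \GAME{Super Mario Land} has a comparatively thin toolkit, I would first try to port the \GAME{Super Mario Bros.}\ crossover of \cite{nphard} --- a vertical passage threading a horizontal one, arranged with ceilings and pits so a Mario falling through cannot land in the horizontal corridor and a Mario walking the horizontal corridor cannot jump into the vertical one --- and verify it survives \GAME{Super Mario Land}'s (notably lower) jump physics. If some missing mechanic breaks it, the fallback is to route the two crossing corridors through a shared chamber policed by an enemy on a fixed patrol (such as a Bombshell Koopa or a Fly) timed to keep the two flows separated, or, failing that, to reduce from a planar variant of 3-SAT so that crossovers are avoided entirely. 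Proving that whichever crossover we choose is genuinely non-interacting, in both directions and under all of Mario's movement options, is where most of the care will go.
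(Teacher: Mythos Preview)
Your high-level plan matches the paper's --- both reduce from 3-SAT via the framework of \cite{nphard} --- but your concrete gadgets do not quite land, and you are missing the mechanic that actually makes the \GAME{Super Mario Land} construction work. For the clause, ``blocks Mario can only break from below'' requires big Mario, which the framework does not supply by default; you would have to issue and then revoke a mushroom at each clause, and that is in fact exactly the paper's device. Its clause gadget lets each visiting literal hit one of three ?-blocks that release a mushroom; during the check phase, Mario collects a released mushroom (if any literal was true) and damage-boosts through a short spike row that small Mario cannot survive.

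The crossover you describe as a purely geometric vertical-through-horizontal pass is not what \cite{nphard} does for \GAME{Super Mario Bros.}, and it almost certainly fails here: if Mario can fall through a horizontal corridor he can generally also land in or jump out of it, and an enemy-patrol-timed separation is fragile to argue rigorously. The paper's crossover is again powerup-based and unidirectional (horizontal before vertical): the horizontal path supplies a star, letting Mario run across a long spike row; the vertical path supplies a mushroom, letting Mario break bricks to climb and then forcing damage to shrink him back. The separation comes from the quantitative gap between the two powerups: a star lasts long enough to cross the long spike row, but a single mushroom's damage boost does not, and the star block is unreachable from the vertical side. This star-versus-mushroom asymmetry is the idea your plan is missing; once you have it, both the clause and the crossover fall out without needing planar-3SAT fallbacks.
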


\begin{proof}
    The framework for proving NP-hardness requires us to create the following gadgets:
    \begin{itemize}
        \item \textbf{Start and Finish.}\quad We use the trivial start and end gadgets.
        \item \textbf{Variable.}\quad We use the same variable gadget as used in \cite{nphard}.
        \item \textbf{Clause and Check.}\quad We use the clause gadget pictured in Figure~\ref{smlclause}. Each of the 3 question blocks \ICON{sml/Star.png} in the left side of the gadget creates a mushroom which can be used to damage boost through the spikes \ICON{sml/Spike.png} at the right of the gadget. However, if there is no mushroom, the player is unable to cross.

\begin{figure}[ht]
    \centering
    \includegraphics[width=0.75\linewidth]{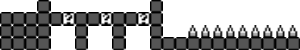}
    \caption{A clause gadget in Super Mario Land}
    \label{smlclause}
\end{figure}

        \item \textbf{Crossover.}\quad We use the crossover gadget pictured in Figure~\ref{smlcross}. This is a unidirectional crossover in which left to right traversal happens before top to bottom. When coming from the left, the player gains access to a star in a question block. This allows them to pass to the right over the row of spikes, which is much longer than pictured such that the player can only barely make it across with the star. They are unable to go up as they do not have a mushroom. To traverse from bottom to top, the player gains access to a mushroom in a question block which allows them to break the breakable blocks \ICON{sml/Brick.png} and move up, where they then take forced damage and are returned to a powered-down state. However, they are unable to cross the spikes to the right because the maximum distance Mario can run on spikes with a star is mush longer than the distance he damage boost with a mushroom, and the player is unable to reach the star block on the left side of the gadget. After the player completes a vertical traversal, there is leakage from horizontal to vertical, but, as discussed in \cite{nphard}, this is not an issue.
\end{itemize}

\begin{figure}[ht]
    \centering
    \includegraphics[width=0.75\linewidth]{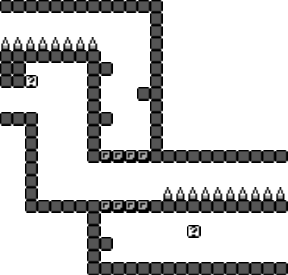}
    \caption{A crossover gadget in Super Mario Land}
    \label{smlcross}
\end{figure}
The combination of these gadgets is sufficient to show NP-hardness.
\end{proof}

\begin{theorem} \label{thm:smr}
    Super Mario Run is NP-hard.
\end{theorem}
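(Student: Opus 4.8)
The plan is to reduce from 3-SAT using the same framework of \cite{nphard} applied to \GAME{Super Mario Land} in Theorem~\ref{thm:sml}, building \textbf{Start} and \textbf{Finish}, \textbf{Variable}, \textbf{Clause and Check}, and \textbf{Crossover} gadgets. The single genuinely new feature of \GAME{Super Mario Run} is that Mario auto-runs: he moves horizontally on his own, reversing horizontal direction essentially only by wall-jumping off a vertical surface, and the player's sole control is the timing and length of jumps (plus the occasional pause block, on which Mario waits until tapped). Consequently every ``player choice'' in the construction must be encoded as ``jump now, and how high, or do not,'' and every ``lock'' must be a hazard --- a pit or a run of spikes --- that Mario clears only while carrying the appropriate powerup. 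As in \GAME{Super Mario Land}, a mushroom lets Mario damage-boost through a short band of spikes while a star lets him run across a long band; Start and Finish are again trivial.

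For the \textbf{Variable} gadget I would use a fork: Mario auto-runs into a spot where a well-timed jump lands him on a raised ledge (the \emph{true} branch) while doing nothing drops him into a lower corridor (the \emph{false} branch), and the two branches then wind through the level as in \cite{nphard}, each passing, for every clause containing the corresponding literal, a question block in the style of Figure~\ref{smlclause} that Mario may strike to obtain the mushroom needed later. Because the clause and check gadget of \GAME{Super Mario Land} (Figure~\ref{smlclause}) relies only on question-block mushrooms and lethal spikes --- mechanics that behave identically in \GAME{Super Mario Run} --- an essentially identical clause/check gadget carries over, after rerouting its entrances and exits so that Mario always arrives and departs while auto-running rightward (or on a wall-jump segment running leftward). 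Care is needed so that auto-run momentum never lets Mario skip a spike band without the powerup; we handle this, as with enemy timing in Section~\ref{sec:open-close}, by fixing each spike band's length relative to Mario's run speed and jump arc.

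The main obstacle is the \textbf{Crossover} gadget, since auto-run removes Mario's ability to dwell, reposition, or freely backtrack, so the standard unidirectional crossover --- left-to-right traversal before bottom-to-top, with the two directions gated by different powerups --- must be rebuilt inside the auto-runner. The plan is to mimic the \GAME{Super Mario Land} crossover of Figure~\ref{smlcross}: the horizontal pass hands Mario a star and forces him across a long spike run that is strictly too long to cross with only a mushroom, while the vertical pass --- realized here as a forced wall-jump climb --- hands Mario a mushroom that lets him break through breakable blocks and take forced damage, after which he can no longer reach the star and so cannot double back across the horizontal spikes. The delicate points are (i) arranging the wall-jump geometry so the vertical climb is genuinely forced and cannot be shortcut by a single high jump, (ii) checking that the residual leakage from horizontal into vertical is, as in \cite{nphard}, harmless for the reduction, and (iii) confirming that no combination of auto-run speed, pause blocks, and jump chaining lets Mario traverse either direction without the gating powerup. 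Once the crossover is verified, assembling the gadgets as in \cite{nphard} yields NP-hardness.
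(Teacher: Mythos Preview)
Your high-level plan --- reduce from 3-SAT via the \cite{nphard} framework with Start/Finish, Variable, Clause, and Crossover gadgets --- matches the paper. But the concrete gadgets you propose are essentially the \GAME{Super Mario Land} gadgets transplanted into \GAME{Super Mario Run}, and that transplant does not go through as written.

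The central problem is your reliance on mushroom damage-boosting. In \GAME{Super Mario Run}, taking damage puts Mario in a bubble that floats \emph{backward}, rather than granting invincibility frames that let him push forward through a spike band. So the clause gadget you describe (mushroom $\Rightarrow$ damage-boost across spikes) and the vertical pass of your crossover (mushroom $\Rightarrow$ break blocks, then take forced damage) do not function. This is not a detail to be patched by lengthening corridors; it is the gating mechanism itself. The paper accordingly abandons the mushroom entirely: its clause gadget uses a \emph{star} to run past grinders, and its crossover uses a completely different, \GAME{Super Mario Run}--specific mechanic --- a timed switch that briefly toggles red/dotted blocks, with the auto-run guaranteeing Mario clears the toggled tunnel before it reverts and preventing him from stalling to leak into the other path.

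A secondary gap is horizontal backtracking. You assert that Mario reverses ``essentially only by wall-jumping,'' and build left-moving routing on that basis. The paper instead uses backflip blocks to implement left-moving wires directly (Figure~\ref{smrwire}); wall-jumping alone would make the long leftward hallways needed between gadgets awkward to realize. In short: right framework, wrong primitives --- the paper's proof works because it swaps out the powerup-gated locks of \GAME{Super Mario Land} for the switch/toggle-block and star/grinder mechanics native to \GAME{Super Mario Run}.
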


\begin{proof}
First, we note that Super Mario Run is unique compared with other Mario games in that the player can, under normal circumstances, only move right (and is pulled to the right by default). Fortunately, the game provides backflip \ICON{smr/backflip.png} blocks which will cause the player to move left when jumping off of them, which allows for easy creation of wires which allow for moving left, as pictured in Figure~\ref{smrwire}. Vertical wires are traversed via wall jumping or falling.

% \begin{figure}[ht]
%     \centering
%     \includegraphics[width=0.75\linewidth]{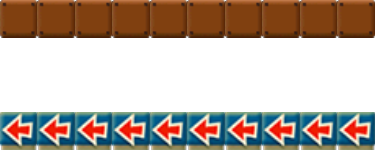}
%     \caption{A left-moving wire in Super Mario Run}
%     \label{smrwire}
% \end{figure}

The framework for proving NP-hardness requires us to create the following gadgets:
    \begin{itemize}
        \item \textbf{Start and Finish.}\quad We use the trivial start and end gadgets.
        \item \textbf{Variable.}\quad We use the variable gadget pictured in Figure~\ref{smrvar}. The forced rightward movement inherently creates a diode, ensuring that once the player chooses either the top or bottom path, they cannot choose the other variable.
        \item \textbf{Clause and Check.}\quad We use the clause gadget pictured in Figure~\ref{smrclause}. This works almost identically to the clause gadget in \cite{nphard}, but with grinders \ICON{smr/saw.png} in place of fire bars to enforce that the player must have a star.
        \item \textbf{Crossover.}\quad We use the clause gadget pictured in Figure~\ref{smrcross}. Vertical traversal is accomplished either via falling or wall jumping, and is bidirectional. The traversal from left to right requires hitting a switch \ICON{smr/switch.png} to briefly toggle the states of the red \ICON{smr/red.png} and outlined \ICON{smr/blank.png} blocks, creating a tunnel from left to right. Because of the forced right condition, the player moves through the tunnel before the switch state can revert, and is unable to stall for leakage.
\end{itemize}

\begin{figure}
    \centering
    
    \begin{subfigure}[t]{0.45\linewidth}
        \centering
        \includegraphics[width=\linewidth]{figures/smr/wire.pdf}
        \caption{A left-moving wire}
        \label{smrwire}
    \end{subfigure}\hfill
    \begin{subfigure}[t]{0.45\linewidth}
        \centering
        \includegraphics[width=\linewidth]{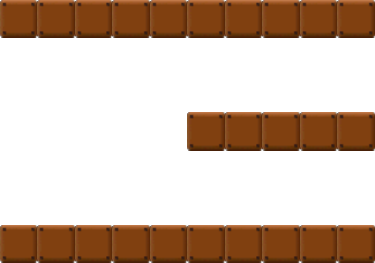}
        \caption{A variable gadget}
        \label{smrvar}
    \end{subfigure}

    \medskip

    \begin{subfigure}[t]{0.6\linewidth}
        \centering
        \includegraphics[width=\linewidth]{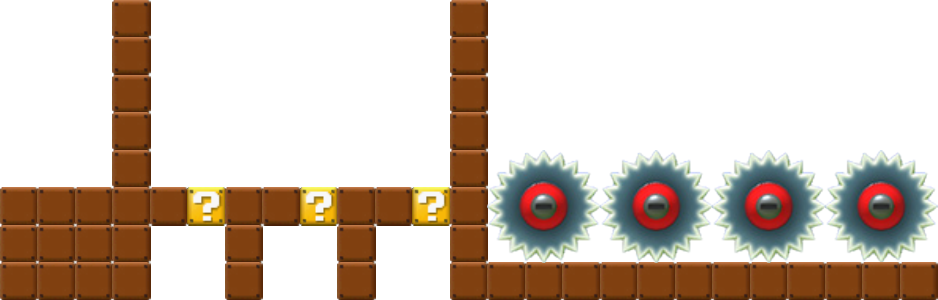}
        \caption{A clause gadget}
        \label{smrclause}
    \end{subfigure}\hfill
    \begin{subfigure}[t]{0.3\linewidth}
        \centering
        \includegraphics[width=\linewidth]{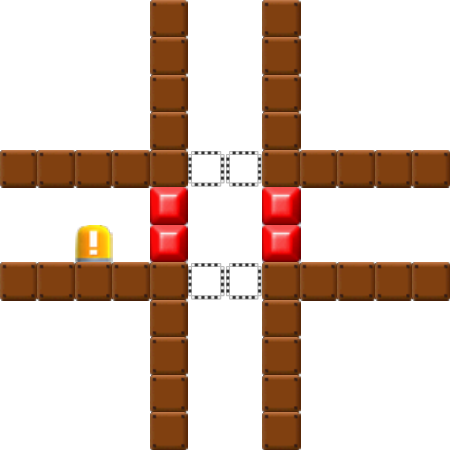}
        \caption{A crossover gadget}
        \label{smrcross}
    \end{subfigure}
    \caption{Super Mario Run Gadgets}
    \label{smr}
\end{figure}

% \begin{figure}[ht]
%     \centering
%     \includegraphics[width=0.75\linewidth]{figures/smr/variable.pdf}
%     \caption{A variable gadget in Super Mario Run}
%     \label{smrvar}
% \end{figure}
% \begin{figure}[ht]
%     \centering
%     \includegraphics[width=0.75\linewidth]{figures/smr/clause.pdf}
%     \caption{A clause gadget in Super Mario Run}
%     \label{smrclause}
% \end{figure}

% \begin{figure}[ht]
%     \centering
%     \includegraphics[width=0.75\linewidth]{figures/smr/crossover.pdf}
%     \caption{A crossover gadget in Super Mario Run}
%     \label{smrcross}
% \end{figure}
The combination of these gadgets is sufficient to show NP-hardness.
\end{proof}

\FloatBarrier

\section{Open Problems}
\label{sec:open}

As discussed in the previous section, while we initially set out to prove all 2D Mario Games PSPACE-complete, \GAME{Super Mario Land} and \GAME{Super Mario Run} have only been shown to be NP-hard. Super Mario Land features a very different set of objects from the other Mario games, and none that stand out to us as being able to toggle the state of a door an unbounded number of times. \GAME{Super Mario Run} poses a similar issue, with enemies behaving somewhat differently from the other games. It also suffers from the issue of access: unlike all other games considered in this paper, there are no known tools for creating custom \GAME{Super Mario Run} levels.

In this paper, we did not place any restrictions on generating new enemies, although none of our doors take advantage of this. If we do enforce that no new objects can be spawned, then we get containment in NPSPACE as the size of the level cannot grow, and hence PSPACE by Savitch's Theorem. In this case, we have PSPACE-completeness for every PSPACE-hard game examined in this paper. However, most of these games have some mechanism by which the level size could grow without bound, with the removal of arbitrary object limits. For example, the original paper on \GAME{Super Mario Bros.} PSPACE-hardness, \cite{smb1}, makes claims that Super Mario Bros.\ is contained in NPSPACE based on its levels only taking up a polynomial amount of space. However, this is not entirely true as \GAME{Super Mario Bros.} also has the Lakitu enemy which can create additional spinies. With the exponentially long timer, a player could stand near a Lakitu for exponentially long and generate exponentially many spinies which walk off-screen. These will require an exponential amount of memory, since we assume in SMB-General that the game remembers the positions of all off-screen enemies. With an exponentially large timer, all of these games are clearly contained in NEXPTIME, but this is a very arbitrary limit, and in the version without a timer, there are no obvious upper bounds other than RE for any of these games.\footnote{In fact, we explore RE-completeness for some of these games in \cite{undecidable}.}

\section*{Acknowledgments}

This paper was initiated during open problem solving in the MIT class
on Algorithmic Lower Bounds: Fun with Hardness Proofs (6.5440),
taught by Erik Demaine in Fall 2023;
and during the 33rd Bellairs Winter Workshop on
Computational Geometry, co-organized by Erik Demaine and Godfried Toussaint
in March 2018 in Holetown, Barbados.
We thank the other participants of that class and workshop
for helpful discussions and providing an inspiring atmosphere.

Much of the work in this paper would not have been possible without the wealth of tools created by the communities surrounding the Super Mario games. In particular, the following level editing and emulation tools were of immense importance to testing various gadget ideas and mechanics:
\begin{itemize}
    \item \textit{The NES Super Mario Brothers 2 Level Editor} by loginsinex --- \url{https://github.com/loginsinex/smb2}
    \item \textit{SMB3 Foundry} by mchlnix --- \url{https://github.com/mchlnix/SMB3-Foundry}
    \item \textit{MarCas} by Coolman --- \url{https://www.romhacking.net/utilities/518/}
    \item \textit{Lunar Magic} by FuSoYa --- \url{https://fusoya.eludevisibility.org/lm/}
    \item \textit{Golden Egg} by Romi --- \url{https://www.smwcentral.net/?p=section&a=details&id=4645}
    \item \textit{Reggie!} by the NSMBW Community --- \url{https://github.com/NSMBW-Community/Reggie-Updated}
    \item \textit{CoinKiller} by Arisotura --- \url{https://github.com/Arisotura/CoinKiller}
    \item \textit{Miyamoto} by aboood40091 --- \url{https://github.com/aboood40091/Miyamoto}
    \item \textit{BizHawk} by TASEmulators --- \url{https://github.com/TASEmulators/BizHawk}
    \item \textit{Mesenrta} by threecreepio --- \url{https://github.com/threecreepio/mesenrta}
    \item \textit{Mesenrta-s} by threecreepio --- \url{https://github.com/threecreepio/mesenrta-s}
    \item \textit{mGBA} by endrift --- \url{https://mgba.io/}
    \item \textit{Dolphin} by the Dolphin Emulator Project --- \url{https://dolphin-emu.org/}
    \item \textit{Citra} by Citra Team --- \url{https://citra-emu.org/}
    \item \textit{Cemu} by Team Cemu --- \url{https://cemu.info/}
\end{itemize}

\bibliography{citations}

\newcommand{\etalchar}[1]{$^{#1}$}
\begin{thebibliography}{ADGV15}

\bibitem[ABD{\etalchar{+}}20]{doors}
Hayashi Ani, Jeffrey Bosboom, Erik~D. Demaine, Jenny Diomidova, Della Hendrickson, and Jayson Lynch.
\newblock Walking through doors is hard, even without staircases: Proving {PSPACE}-hardness via planar assemblies of door gadgets.
\newblock In {\em Proceedings of the 10th International Conference on Fun with Algorithms (FUN 2020)}, pages 3:1--3:23, 2020.
\newblock Full paper available as \href{https://arxiv.org/abs/2006.01256}{arXiv:2006.01256}.

\bibitem[ADGV15]{nphard}
Greg Aloupis, Erik~D. Demaine, Alan Guo, and Giovanni Viglietta.
\newblock Classic {N}intendo games are (computationally) hard.
\newblock {\em Theoretical Computer Science}, 586:135--160, 2015.

\bibitem[DVW16]{smb1}
Erik~D. Demaine, Giovanni Viglietta, and Aaron Williams.
\newblock {S}uper {M}ario {B}ros.\ is harder/easier than we thought.
\newblock In {\em Proceedings of the 8th International Conference on Fun with Algorithms (FUN 2016)}, pages 13:1--13:14, La Maddalena, Italy, June 2016.

\bibitem[MIT24]{undecidable}
{MIT Hardness Group}, Hayashi Ani, Erik~D. Demaine, Holden Hall, Ricardo Ruiz, and Naveen Venkat.
\newblock You can't solve these {S}uper {M}ario {B}ros.\ levels: Undecidable {M}ario games.
\newblock In {\em Proceedings of the 12th International Conference on Fun with Algorithms (FUN 2024)}, pages 29:1--29:20, La Maddalena, Italy, June 2024.

\bibitem[SMW]{glitches}
{Super Mario Wiki}.
\newblock List of {S}uper {M}ario {B}ros.\ 2 glitches.
\newblock \url{https://www.mariowiki.com/List_of_Super_Mario_Bros._2_glitches}.
\newblock Accessed December 2023.

\bibitem[Vig15]{Viglietta-2015}
Giovanni Viglietta.
\newblock Lemmings is {PSPACE}-complete.
\newblock {\em Theoretical Computer Science}, 586:120--134, 2015.

\end{thebibliography}
\bibliographystyle{alpha-key}

\end{document}